%
%
%
%
%
\documentclass{svjour3}                     
\smartqed  
\usepackage{graphicx}
\usepackage{mathptmx}      
%
\usepackage{amsmath}
\usepackage{amssymb}
\usepackage{enumerate}
\usepackage{eucal}
\usepackage{latexsym}
\usepackage{rotating}
%
\newcommand{\A}{{\mathcal A}}
\newcommand{\C}{{\mathcal C}}
\newcommand{\N}{{\mathcal N}}
\renewcommand{\S}{{\mathcal S}}
\newcommand{\bbF}{{\mathbb F}}
\newcommand{\vu}{{\sf u}}
\newcommand{\vv}{{\sf v}}
\newcommand{\vx}{{\sf x}}
\newcommand{\vzero}{{\sf 0}}
\newcommand{\dist}{{\rm dist}}
\newcommand{\supp}{{\rm supp}}
\newcommand{\wt}{{\rm wt}}
%
\journalname{Des. Codes Cryptogr.}
\begin{document}

\title{List Decodability at Small Radii\thanks{The research of Y. M. Chee and S. Ling is supported in part by the National Research
Foundation of Singapore under Research Grant NRF-CRP2-2007-03. The research of Y. M. Chee
is also supported in part by the Nanyang Technological University under Research
grant M58110040.\\
The research of G. Ge is supported in part by the National Outstanding Youth Science Foundation
of China under Grant 10825103, National Natural Science Foundation of China under Grant
10771193, Specialized Research Fund for the Doctoral Program of Higher Education,
Program for New Century Excellent Talents in University, and Zhejiang Provincial Natural
Science Foundation of China under Grant D7080064. \\
The research of L. Ji is supported by NSFC under Grants 10701060, 10831002, and the
Qing Lan Project of Jiangsu province. \\
The research of J. Yin is supported by NSFC under Grants 10831002 and 10671140.}
}


\author{Yeow Meng Chee \and Gennian Ge \and Lijun Ji \and San Ling \and Jianxing Yin}


\institute{Y. M. Chee \and S. Ling \at
              Division of Mathematical Sciences, School of Physical \& Mathematical Sciences,
              Nanyang Technological University, Singapore 637371 \\
              \email{ymchee@ntu.edu.sg, lingsan@ntu.edu.sg}           
           \and
           G. Ge \at
           Department of Mathematics, Zhejiang University, Hangzhou 310027, Zhejiang, China \\
           \email{gnge@zju.edu.cn}
           \and
           L. Ji \and J. Yin \at
           Department of Mathematics, Suzhou University, Suzhou 215006, Jiangsu, China \\
           \email{jilijun@suda.edu.cn, jxyin@suda.edu.cn}
}

\date{Received: date / Accepted: date}

\maketitle

\begin{abstract}
$A'(n,d,e)$, the smallest $\ell$ for which every binary error-correcting code of length $n$ and 
minimum distance $d$ is decodable with a list of size $\ell$ up to radius $e$, is
determined for all $d\geq 2e-3$. As a result, $A'(n,d,e)$ is determined for all
$e\leq 4$, except for 42 values of $n$.
\keywords{Bounded-weight codes \and Constant-weight codes \and Error-correcting codes \and List decoding}
\end{abstract}

\section{Introduction}

If more than $d/2$ errors occur when using a binary error-correcting code
of minimum distance $d$, unambiguous decoding cannot always be guaranteed.
Instead of simply letting the decoding algorithm report a failure in this case,
{\em list decoding}, a notion introduced independently by
Elias \cite{Elias:1957} and Wozencraft \cite{Wozencraft:1958}, demands that the
decoding algorithm returns a small list of codewords that contains the transmitted
codeword, as a weak form of error recovery.

Early applications of list decoding include
\begin{enumerate}[(i)]
\item tighter analysis of error-probability and error-exponent of probabilistic channels
\cite{Elias:1957,Forney:1968,Wozencraft:1958},
\item derivation of the Elias-Bassalygo bound
\cite{Bassalygo:1965,Shannonetal:1967a,Shannonetal:1967b},
\item determination of channel capacities \cite{Ahlswede:1973}, and
\item error-correction under an adversarial model
\cite{Blinovsky:1986,Blinovsky:1997,Elias:1991,ZyablovPinsker:1982}.
\end{enumerate}
Renewed interest in list decoding in theoretical computer science
stemmed from the work of Goldreich and Levin
\cite{GoldreichLevin:1989}, and is largely due to the breakthrough discovery by Sudan
\cite{Sudan:1997} of the first efficient algorithm for list decoding a nontrivial code.
Sudan's work led to a multitude of new applications of list decoding in theoretical computer
science and became a powerful tool in complexity theory.

In this paper, we study the list decodability of error-correcting codes. More specifically,
we are interested in determining the smallest $\ell$ so that every error-correcting code
of length $n$ and
minimum distance $d$ can be list decoded with a list of size $\ell$ at a given radius $e$.
This parameter, denoted by $A'(n,d,e)$, was investigated by Elias \cite{Elias:1991}
and more recently by Guruswami and Sudan \cite{GuruswamiSudan:2001} as well
as Cassuto and Bruck \cite{CassutoBruck:2004}. These works
all focused on giving upper bounds on $A'(n,d,e)$. In contrast, our attention in this paper is on
determining the exact value of $A'(n,d,e)$ for specified $d$ and $e$.
More specifically, we determine the exact value of $A'(n,d,e)$ for all $e\leq 4$, 
except for 42 values
of $n$. A summary of the results obtained is provided in Table \ref{A''}.

\setlength\rotFPtop{0pt plus 1fil}
\begin{sidewaystable}
\renewcommand{\arraystretch}{1.5}
\centering
\caption{$A'(n,d,e)$, for $1\leq e\leq 4$}
\label{A''}
\begin{tabular}{cc|c|l|l|l}
\hline
  & $e$ & 1 & 2 & 3 & 4 \\
$d$ & & & & & \\
\hline
\hline
1 & & \multicolumn{4}{c}{$\sum_{k=0}^e \binom{n}{k}$} \\
\hline
2 & & \multicolumn{4}{c}{$\sum_{k=0}^e \binom{n-1}{k}$} \\
\hline
3 & & 1 & $\left\lfloor \frac{n+1}{2}\right\rfloor$ & $\begin{cases}
\left\lfloor \frac{n+1}{3}\left\lfloor \frac{n}{2}\right\rfloor\right\rfloor,&\text{if $n\not\equiv 4\bmod{6}$,
$n\not\in\{1,3\}$} \\
\left\lfloor \frac{(n+1)n}{6}\right\rfloor -1,&\text{if $n\equiv 4\bmod{6}$} \\
2,&\text{if $n=3$} \\
1,&\text{if $n=1$}
\end{cases}$
&
\begin{minipage}{0.42\textwidth}
$\begin{cases}
\left\lfloor \frac{n+1}{4} \left\lfloor \frac{n}{3} \left\lfloor \frac{n-1}{2} \right\rfloor \right\rfloor \right\rfloor+1, &\text{if $n\not\equiv 5\bmod{6}$} \\
\left\lfloor \frac{n+1}{4} \left( \left\lfloor \frac{n}{3} \left\lfloor \frac{n-1}{2}\right\rfloor \right\rfloor -1\right)\right\rfloor+1,&\text{if $n\equiv 5\bmod{6}$,}
\end{cases}$ \\
except possibly for $n\in\{22$, $34$, $46$, $58$, $70$, $82$, $94$, $118$, $142$, $154$, $166$, $178$, $190$, $202$, $214$, $274$, $286$, $454$, $466$, $478$, $958\}$
\end{minipage}
 \\
\hline
4 & & & $\left\lfloor \frac{n}{2}\right\rfloor$ & $\begin{cases}
\left\lfloor \frac{n}{3}\left\lfloor \frac{n-1}{2}\right\rfloor\right\rfloor,&\text{if $n\not\equiv 5\bmod{6}$,
$n\not\in\{1,2,4\}$} \\
\left\lfloor \frac{n(n-1)}{6}\right\rfloor -1,&\text{if $n\equiv 5\bmod{6}$} \\
2,&\text{if $n=4$} \\
1,&\text{if $n\in\{1,2\}$} \\
\end{cases}$
&
\begin{minipage}{0.42\textwidth}
$\begin{cases}
\left\lfloor \frac{n}{4} \left\lfloor \frac{n-1}{3} \left\lfloor \frac{n-2}{2} \right\rfloor \right\rfloor \right\rfloor+1, &\text{if $n\not\equiv 0\bmod{6}$} \\
\left\lfloor \frac{n}{4} \left( \left\lfloor \frac{n-1}{3} \left\lfloor \frac{n-2}{2}\right\rfloor \right\rfloor -1\right)\right\rfloor+1,&\text{if $n\equiv 0\bmod{6}$,}
\end{cases}$ \\
except possibly for $n\in\{23$, $35$, $47$, $59$, $71$, $83$, $95$, $119$, $143$, $155$, $167$, $179$, $191$, $203$, $215$, $275$, $287$, $455$, $467$, $479$, $959\}$
\end{minipage}
\\
\hline
5 & & & 1 & $\left\lfloor \frac{n+1}{3}\right\rfloor$ & $\begin{cases}
\left\lfloor \frac{(n+1)n}{12}\right\rfloor-1,&\text{if $n\equiv 6,9\bmod{12}$, $n\not\in\{9,18\}$} \\
\left\lfloor \frac{n+1}{4}\left\lfloor \frac{n}{3}\right\rfloor\right\rfloor,
&\text{if $n\not\equiv 6,9\bmod{12}$, $n\not\in\{7,8,10,16\}$}\\
25,&\text{if $n=18$} \\
\left\lfloor \frac{n+1}{4}\left\lfloor \frac{n}{3}\right\rfloor\right\rfloor -1,&\text{if $n\in\{8,16\}$} \\
\left\lfloor \frac{n+1}{4}\left\lfloor \frac{n}{3}\right\rfloor\right\rfloor -2,&\text{if $n\in\{7,9,10\}$} 
\end{cases}$
\\
\hline
6 & & & & $\left\lfloor \frac{n}{3}\right\rfloor$ & $\begin{cases}
\left\lfloor \frac{n(n-1)}{12}\right\rfloor-1,&\text{if $n\equiv 7,10\bmod{12}$, $n\not\in\{10,19\}$} \\
\left\lfloor \frac{n}{4}\left\lfloor \frac{n-1}{3}\right\rfloor\right\rfloor,
&\text{if $n\not\equiv 7,10\bmod{12}$, $n\not\in\{8,9,11,17\}$}\\
25,&\text{if $n=19$} \\
\left\lfloor \frac{n}{4}\left\lfloor \frac{n-1}{3}\right\rfloor\right\rfloor -1,&\text{if $n\in\{9,17\}$} \\
\left\lfloor \frac{n}{4}\left\lfloor \frac{n-1}{3}\right\rfloor\right\rfloor -2,&\text{if $n\in\{8,10,11\}$} 
\end{cases}$
\\
\hline
7 & & & & 1 & $\left\lfloor \frac{n+1}{4}\right\rfloor$ \\
\hline
8 & & & & & $\left\lfloor \frac{n}{4}\right\rfloor$ \\
\hline
9 & & & & & 1 \\
\hline
\end{tabular}
\end{sidewaystable}

We review some coding-theoretic terminology and notations next.

\subsection{Preliminaries}

The set of integers $\{1,\ldots,n\}$ is denoted by $[n]$. 

Let $\bbF_2^n$ be the vector space of all the binary $n$-tuples, endowed with the
{\em Hamming metric}. Specifically, the {\em Hamming distance}
$\Delta(\vu,\vv)$ between $\vu,\vv\in\bbF_2^n$ is defined as the number of 
positions where $\vu$ and $\vv$ differ. The {\em Hamming weight}
$\wt(\vu)$ of $\vu\in\bbF_2^n$ is its distance from the origin, that is, $\wt(\vu)=\Delta(\vu,\vzero)$.
For $\vu\in\bbF_2^n$ and $i\in [n]$, $\vu_i$ denotes the $i$th component of $\vu$.
The {\em support} $\supp(\vu)$ of $\vu\in\bbF_2^n$ is the set of positions of $\vu$ with nonzero
value, that is, $\supp(\vu)=\{i\in[n]: \vu_i=1\}$.

A {\em binary code} of length $n$
is a nonempty subset of $\bbF_2^n$ and its elements are called {\em codewords}.
Since we are concerned with only binary codes in this paper,
henceforth we omit the ``binary'' quantifier throughout. The
number of codewords in a code is called its {\em size}.
The {\em minimum distance}
of a code $\C$, denoted $\dist(\C)$, is 
the quantity $\min_{\vu,\vv\in\C,\vu\not=\vv}\{\Delta(\vu,\vv)\}$.
A code of length $n$ and minimum distance $d$ is denoted an $(n,d)$ code.
Given a code $\C\subseteq\bbF_2^n$, the translate of $\C$ by $\vu\in\bbF_2^n$ is
the code $\C+\vu = \{\vv+\vu:\vv\in\C\}$, where vector addition is in $\bbF_2^n$.

A {\em set system} is a pair $\S=(X,\A)$, where $X$ is a finite set of {\em points}
and $\A\subseteq 2^X$. Elements of $\A$ are called {\em blocks}. The {\em order}
of $\S$ is the number of points, $|X|$. The {\em size} of $\S$ is the number of blocks
in $\A$.
The natural bijection
between $\bbF_2^n$ and $2^{[n]}$ (where a vector $\vu\in\bbF_2^n$
corresponds to the set $\supp(\vu)\in 2^{[n]}$) implies that a binary code $\C$ of
length $n$ can be represented by a set system $\S$ of order $n$, and vice versa:
\begin{equation*}
\C \subseteq \bbF_2^n \longleftrightarrow ([n],\{\supp(\vu):\vu\in\C\}).
\end{equation*}
Hence, we may speak of the set system of a binary code. When more natural,
we deal with the set system of a binary code, rather than the binary code itself.

The {\em Hamming ball} of radius $r$ around $\vu$ is the set
\begin{equation*}
B(\vu,r)=\{\vv\in\bbF_2^n : \Delta(\vu,\vv)\leq r\}.
\end{equation*}
For a code $\C\subseteq \bbF_2^n$,
the quantity $e=\left\lfloor (\dist(\C)-1)/2\right\rfloor$ is referred to as the
{\em error-correction bound} of $\C$ \cite{Sudan:1997}. This terminology reflects that
given any $\vu\in\bbF_2^n$,
\begin{equation*}
| B(\vu,e) \cap \C | \leq 1,
\end{equation*}
so that any transmitted codeword corrupted by at most $e$ errors can be unambiguously decoded
(to the nearest codeword) with maximum likelihood decoding. If the number of errors is beyond
the error-correction bound, we may not always be able to decode to a unique codeword. However,
it is desirable that in this case, the decoding algorithm outputs a list of candidate codewords,
containing the transmitted codeword. This
motivates the definition of list decodable codes.

For positive integers $e$ and $\ell$, a code $\C\subseteq\bbF_2^n$ is {\em $(e,\ell)$-list decodable}
if every ball of radius $e$ contains at most $\ell$ codewords, that is,
\begin{equation*}
|B(\vu,e)\cap \C| \leq \ell, \text{ for all $\vu\in\bbF_2^n$}.
\end{equation*}

\section{The Function $A'(n,d,e)$}

The key function we study in this paper is $A'(n,d,e)$, defined to be the
maximum size of a set $S\subseteq\bbF_2^n$ contained in a ball of radius $e$ such that
$\Delta(\vu,\vv)\geq d$ for all distinct $\vu,\vv\in S$.
More formally,
\begin{equation}
\label{A'def}
A'(n,d,e) = \max\{ |S| : \text{ $S$ is an $(n,d)$ code, and $S\subseteq B(\vx,e)$ for some $\vx\in\bbF_2^n$} \}.
\end{equation}
Notice that translating a code does not affect its distance properties. Hence we may assume that
the maximum in (\ref{A'def}) is attained when $\vx=\vzero$.
The definition of $A'(n,d,e)$ can then take the following equivalent form:
\begin{equation*}
A'(n,d,e) = \max\{ |S| : \text{ $S$ is an $(n,d)$ code, and $\wt(\vu)\leq e$ for all $\vu\in S$} \}. \\
\end{equation*}
We call an $(n,d)$ code having codewords of weight at most $e$ an
$(n,d,e)$ {\em bounded-weight code}. Hence, determining $A'(n,d,e)$ is
equivalent to determining the maximum size of an $(n,d,e)$ bounded-weight code.
An $(n,d,e)$ bounded-weight code of size $A'(n,d,e)$ is said to be {\em optimal}.

In contrast, an $(n,d,e)$ {\em constant-weight code} is an $(n,d)$ code whose codewords are
all of weight $e$. The maximum size of an $(n,d,e)$ constant-weight code is denoted
$A(n,d,e)$. The determination of $A(n,d,e)$ has been a central problem in coding theory, with
a rich literature (see, for example, \cite{Brouweretal:1990}). 
The importance
of the function $A'(n,d,e)$ is only realized relatively recently \cite{Guruswami:2004},
due to the following observation.

\begin{proposition} \hfill
\label{A'}
\begin{enumerate}[(i)]
\item If $\ell\geq A'(n,d,e)$, then every $(n,d)$ code is $(e,\ell)$-list decodable.
\item If $\ell<A'(n,d,e)$, then there exists an $(n,d)$ code that is not $(e,\ell)$-list decodable.
\end{enumerate}
\end{proposition}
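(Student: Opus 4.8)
The plan is to observe that both parts are essentially direct translations between the two definitions at hand: $A'(n,d,e)$ measures the largest number of codewords of an $(n,d)$ code that can be packed into a single Hamming ball of radius $e$, whereas $(e,\ell)$-list decodability asserts that no ball of radius $e$ captures more than $\ell$ codewords. The entire argument is therefore a matter of matching up these two quantities through the ball $B(\vu,e)$, and I expect no genuine analytic difficulty.

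For part (i), I would begin with an arbitrary $(n,d)$ code $\C$ and an arbitrary center $\vu\in\bbF_2^n$, and examine the set $S=B(\vu,e)\cap\C$. Every element of $S$ lies within distance $e$ of $\vu$, so $S\subseteq B(\vu,e)$; and since $S\subseteq\C$, any two distinct members of $S$ are at Hamming distance at least $d$. Hence $S$ satisfies exactly the conditions appearing in the definition \eqref{A'def}, so $|S|\leq A'(n,d,e)$. Combining this with the hypothesis $\ell\geq A'(n,d,e)$ yields $|B(\vu,e)\cap\C|\leq\ell$, and as $\vu$ was arbitrary this is precisely the statement that $\C$ is $(e,\ell)$-list decodable.

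For part (ii), I would run the same correspondence in reverse. By the definition of $A'(n,d,e)$ there is a set $S\subseteq B(\vx,e)$ for some $\vx$, with pairwise Hamming distances at least $d$ and $|S|=A'(n,d,e)$. Taking the code $\C:=S$ itself, we see that $\C$ is an $(n,d)$ code contained in $B(\vx,e)$, so the single ball $B(\vx,e)$ already contains all $|S|=A'(n,d,e)>\ell$ of its codewords. This exhibits a radius-$e$ ball holding strictly more than $\ell$ codewords of $\C$, so $\C$ fails to be $(e,\ell)$-list decodable, as required.

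The only point that warrants care — and what I would regard as the main (though minor) obstacle — is keeping the conventions consistent: one must read ``$(n,d)$ code'' throughout as ``minimum distance at least $d$'', matching the inequality $\Delta(\vu,\vv)\geq d$ used in defining $A'(n,d,e)$. Under that reading, the intersection $S$ in part (i) is a legitimate competitor in the maximization even when its minimum distance strictly exceeds $d$, and the extremal configuration in part (ii) is a legitimate $(n,d)$ code; no further subtlety then arises.
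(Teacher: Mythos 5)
Your proposal is correct and follows essentially the same argument as the paper: bound $|B(\vu,e)\cap\C|$ by $A'(n,d,e)$ via the definition for part (i), and use an extremal set attaining $A'(n,d,e)$ as the counterexample code for part (ii). Your added remark about reading ``$(n,d)$ code'' as minimum distance at least $d$ is a reasonable clarification but does not change the substance.
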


\begin{proof}
Suppose $\C$ is an $(n,d)$ code. Then $|B(\vu,e)\cap \C| \leq A'(n,d,e)$ for all $\vu\in\bbF_2^n$.
It follows that if $\ell\geq A'(n,d,e)$, then $\C$ is $(e,\ell)$-list decodable. If $\ell < A'(n,d,e)$,
then an $(n,d,e)$ bounded-weight code of size $A'(n,d,e)$ is an $(n,d)$ code that is not
$(e,\ell)$-list decodable.
\qed
\end{proof}

Consequently, the problem of determining $A'(n,d,e)$ has attracted some direct attention
\cite{CassutoBruck:2004,Elias:1991,Guruswami:2004,GuruswamiSudan:2001}.
The determination of the exact value of $A'(n,d,e)$ is no doubt a difficult problem, so
most work has gone to establishing upper bounds on $A'(n,d,e)$. Proposition \ref{A'}(i)
can be applied when an upper bound of $A'(n,d,e)$ is known.
However,
this is not true for Proposition \ref{A'}(ii). So we do not get as strong a conclusion as if we
know the exact value of $A'(n,d,e)$. A useful result proven by Elias \cite{Elias:1991} is
the following:

\begin{proposition}[Elias {\cite[Proposition 10(c)]{Elias:1991}}]
\label{Elias}
If $d$ is odd, then $A'(n,d,e)=A'(n+1,d+1,e)$.
\end{proposition}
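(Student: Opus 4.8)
The statement is the bounded-weight analogue of the classical fact that, for odd $d$, appending an overall parity-check turns an $(n,d)$ code into an $(n+1,d+1)$ code and vice versa. The plan is to establish the two inequalities $A'(n+1,d+1,e)\ge A'(n,d,e)$ and $A'(n+1,d+1,e)\le A'(n,d,e)$ by a parity-check extension and a puncturing argument respectively, taking care that the radius-$e$ constraint (the bounded-weight condition) is preserved throughout.

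For the first inequality I would start from an optimal $(n,d,e)$ bounded-weight code $S$, assuming after translation that $\wt(\vu)\le e$ for every $\vu\in S$. To each $\vu\in S$ I append one coordinate carrying the parity of $\vu$, i.e.\ form $\vu'\in\bbF_2^{n+1}$ with $\vu'_i=\vu_i$ for $i\in[n]$ and $\vu'_{n+1}=\wt(\vu)\bmod 2$, so that every $\wt(\vu')$ is even. This map is injective, so the resulting code $S'$ satisfies $|S'|=|S|$. The oddness of $d$ enters the distance bookkeeping: since $\Delta(\vu',\vv')=\Delta(\vu,\vv)+[\vu'_{n+1}\ne\vv'_{n+1}]$, a pair at distance exactly $d$ must have $\vu,\vv$ of opposite weight-parity (as $d$ is odd), hence differ in the new coordinate and be pushed to distance $d+1$; pairs already at distance $\ge d+1$ are unaffected. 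Thus $\dist(S')\ge d+1$.

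The main obstacle is that the parity-check coordinate can raise a weight from $e$ to $e+1$, so $S'$ need not lie in $B(\vzero,e)$. I would resolve this by re-centering rather than insisting on the origin: take $\vx\in\bbF_2^{n+1}$ with $\vx_i=0$ for $i\in[n]$ and $\vx_{n+1}$ equal to the parity of $e$ (that is, $1$ if $e$ is odd and $0$ if $e$ is even). Then $\Delta(\vx,\vu')=\wt(\vu)+[(\wt(\vu)\bmod 2)\ne(e\bmod 2)]$; when $\wt(\vu)$ and $e$ share parity this equals $\wt(\vu)\le e$, and otherwise $\wt(\vu)\le e-1$ forces it to be at most $e$ as well. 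Hence $S'\subseteq B(\vx,e)$, giving $A'(n+1,d+1,e)\ge|S'|=|S|=A'(n,d,e)$.

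For the reverse inequality I would puncture. Starting from an optimal $(n+1,d+1,e)$ bounded-weight code $T\subseteq B(\vx,e)$, I delete the last coordinate from every codeword and from $\vx$. Deleting a single coordinate lowers each pairwise distance by at most $1$, yielding minimum distance $\ge d$, and does not increase the distance to the punctured centre, so the image stays inside a radius-$e$ ball. Because $\dist(T)\ge d+1\ge 2$, no two codewords differ only in the deleted coordinate, so puncturing is injective and preserves size; this produces an $(n,d,e)$ bounded-weight code and hence $A'(n,d,e)\ge|T|=A'(n+1,d+1,e)$. Combining the two inequalities gives the claim. I expect the extension direction—specifically the re-centering required to keep the extended code within radius $e$—to be the only genuinely delicate point, while the puncturing direction is routine.
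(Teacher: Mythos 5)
Your proof is correct. Note that the paper itself gives no proof of this proposition---it is quoted directly from Elias \cite[Proposition 10(c)]{Elias:1991}---so there is nothing to compare against in the text; your argument is the standard parity-extension/puncturing proof, and you correctly identify and resolve the one nontrivial point, namely that the extended code need not stay in a ball about the origin, by re-centering at the vector whose last coordinate is $e\bmod 2$ (equivalently, one could translate the extended code back to the origin). The distance bookkeeping (a pair at odd distance gains one in the appended coordinate, so oddness of $d$ forces every pair up to distance at least $d+1$) and the injectivity of puncturing (from $d+1\geq 2$) are both handled properly, so the argument is complete.
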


In subsequent sections, we determine the value
of $A'(n,d,e)$ for several parameter sets. We end this section with some easy exact values.

\begin{proposition}
\label{easy}
\hfill
\begin{enumerate}[(i)]
\item If $d\geq 2e+1$, then $A'(n,d,e)=1$.
\item If $d=2e$, then $A'(n,d,e)=\left\lfloor n/e\right\rfloor$.
\item If $d=2e-1$, then $A'(n,d,e)=\left\lfloor (n+1)/e\right\rfloor$.
\item If $d=2$, then $A'(n,d,e)=\sum_{k=0}^e \binom{n-1}{k}$.
\item If $d=1$, then $A'(n,d,e)=\sum_{k=0}^e \binom{n}{k}$.
\end{enumerate}
\end{proposition}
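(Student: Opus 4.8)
The plan is to prove the five parts separately, since each rests on a different elementary idea, and to reduce (iii) to (ii) using the already-quoted result of Elias. The two extreme cases are quickest. For (i), I would invoke the triangle inequality: if $\vu,\vv$ are distinct codewords of weight at most $e$, then $\Delta(\vu,\vv)\le\wt(\vu)+\wt(\vv)\le 2e<d$, contradicting the distance requirement, so the code has at most one codeword; since a single codeword always qualifies, $A'(n,d,e)=1$. For (v), a minimum-distance requirement of $1$ merely forces the codewords to be distinct, so the optimum is the full set of vectors of weight at most $e$, numbering $\sum_{k=0}^e\binom{n}{k}$.

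For (ii) I would push the same weight bound to the boundary. Writing $\Delta(\vu,\vv)=\wt(\vu)+\wt(\vv)-2|\supp(\vu)\cap\supp(\vv)|$, the requirement $\Delta(\vu,\vv)\ge 2e$ together with $\wt(\vu),\wt(\vv)\le e$ forces $\wt(\vu)=\wt(\vv)=e$ and $\supp(\vu)\cap\supp(\vv)=\emptyset$ as soon as the code has two codewords. Hence a code of size at least $2$ is exactly a family of pairwise-disjoint $e$-subsets of $[n]$, of which there are at most $\lfloor n/e\rfloor$; conversely, partitioning $[n]$ into $\lfloor n/e\rfloor$ disjoint $e$-sets attains this, and the size-$1$ code is dominated by $\lfloor n/e\rfloor$ in the meaningful range $n\ge e$ (I would flag the degenerate regime $n<e$ as a minor caveat). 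Part (iii) is then immediate: since $d=2e-1$ is odd, Proposition \ref{Elias} gives $A'(n,2e-1,e)=A'(n+1,2e,e)$, and (ii) evaluates the right-hand side as $\lfloor(n+1)/e\rfloor$.

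The crux is (iv). For the upper bound I would use the projection $\pi\colon\bbF_2^n\to\bbF_2^{n-1}$ that deletes the last coordinate: two codewords with the same image would differ only in the last position, hence be at distance $1<2$, so $\pi$ is injective on any distance-$2$ code; since deleting a coordinate cannot raise the weight, the image lies among the weight-$\le e$ vectors of length $n-1$, giving $|S|\le\sum_{k=0}^e\binom{n-1}{k}$. The main obstacle is the matching construction, because the naive ``append an overall parity bit'' code pushes odd-weight vectors of weight $e$ up to weight $e+1$, violating the cap. I would instead append to each length-$(n-1)$ vector $\vw$ of weight at most $e$ the bit $b(\vw)=(e-\wt(\vw))\bmod 2$. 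One then checks two things: this bit equals $1$ only when $\wt(\vw)\le e-1$, so every resulting codeword still has weight at most $e$; and if two projections are at distance $1$ their weights have opposite parity, so they receive different last bits and the lifted codewords are at distance $2$, whence the code has minimum distance at least $2$. As the map hits each admissible projection exactly once, its size meets the upper bound. The delicate point to verify carefully is precisely this simultaneous compatibility of the parity choice with the weight constraint, which is what makes the construction nontrivial.
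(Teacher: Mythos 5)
Your proposal is correct, and four of the five parts coincide with the paper's (very terse) argument: (i) by the triangle inequality on weights, (v) by taking all vectors of weight at most $e$, (ii) by forcing weight exactly $e$ and disjoint supports, and (iii) by applying Proposition \ref{Elias} to (ii). The one genuine divergence is part (iv). The paper simply cites part (v) together with Proposition \ref{Elias}: since $d=1$ is odd, $A'(n,1,e)=A'(n+1,2,e)$, which yields the formula with no further work. You instead prove (iv) directly: the deletion map $\pi$ gives the upper bound $\sum_{k=0}^e\binom{n-1}{k}$ because distinct codewords of a distance-$2$ code cannot agree after deleting one coordinate, and your appended bit $b(\vw)=(e-\wt(\vw))\bmod 2$ is exactly the right replacement for the usual overall parity check, since it equals $0$ whenever $\wt(\vw)=e$ (so the weight cap is respected) while still separating any two length-$(n-1)$ vectors at distance $1$ (their weights have opposite parity, hence opposite bits). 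In effect you have reproved the $d=1$ case of Elias's Proposition 10(c) from scratch; what this buys is a self-contained argument that does not lean on the cited result, at the cost of a page of extra verification. Your flag about the degenerate regime $n<e$ in part (ii) is also well taken: there the code still has size $1$ while $\lfloor n/e\rfloor=0$, an edge case the paper's statement silently ignores.
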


\begin{proof}
\begin{enumerate}[(i)]
\item Since $\Delta(\vu,\vv)\leq \wt(\vu)+\wt(\vv)\leq 2e$, there can be at most one codeword in the code.
\item Follows from the observation that all the 
codewords must have weight $e$ and have disjoint supports.
\item Follows from part (ii) and Proposition \ref{Elias}.
\item Follows from part (v) and Proposition \ref{Elias}.
\item Taking all binary $n$-tuples of weight $e$ or less gives the required code of distance one.
\qed
\end{enumerate} 
\end{proof}

Proposition \ref{easy} can be used to completely determine the exact value of $A'(n,d,1)$.

\begin{corollary}
\label{A'(n,d,1)}
\begin{equation*}
A'(n,d,1)=
\begin{cases}
1,&\text{if $d\geq 3$} \\
n,&\text{if $d=2$} \\
n+1,&\text{if $d= 1$.}
\end{cases}
\end{equation*}
\end{corollary}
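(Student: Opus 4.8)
The plan is to obtain this corollary as an immediate specialization of Proposition \ref{easy} to the case $e=1$, splitting according to the three ranges of $d$ that appear in the statement. No new idea is required: everything reduces to substituting $e=1$ into the formulas already established there and evaluating the resulting expressions, so the entire content of the proof is a case analysis on $d$ together with two trivial binomial evaluations.

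First, for $d\geq 3$, I would note that with $e=1$ the hypothesis $d\geq 3$ is precisely $d\geq 2e+1$, so Proposition \ref{easy}(i) applies verbatim and gives $A'(n,d,1)=1$. Next, for $d=2$, I would invoke Proposition \ref{easy}(iv) with $e=1$ and compute $\sum_{k=0}^{1}\binom{n-1}{k}=\binom{n-1}{0}+\binom{n-1}{1}=1+(n-1)=n$; as an independent check, since $d=2=2e$ when $e=1$, Proposition \ref{easy}(ii) gives $\lfloor n/1\rfloor=n$ by the same substitution. Finally, for $d=1$, I would invoke Proposition \ref{easy}(v) with $e=1$ and compute $\sum_{k=0}^{1}\binom{n}{k}=\binom{n}{0}+\binom{n}{1}=1+n=n+1$; again Proposition \ref{easy}(iii), applicable because $d=1=2e-1$ when $e=1$, yields $\lfloor (n+1)/1\rfloor=n+1$, matching. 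Collecting the three cases produces exactly the claimed piecewise formula.

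Because each case follows directly from a single part of Proposition \ref{easy}, there is no genuine obstacle here. The only point worth verifying is that the three ranges $d\geq 3$, $d=2$, and $d=1$ are exhaustive and mutually disjoint for positive $d$, so that the piecewise definition is well posed and the corollary indeed determines $A'(n,d,1)$ for \emph{every} admissible value of $d$; this is immediate.
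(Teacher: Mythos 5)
Your proposal is correct and matches the paper's intended argument exactly: the paper derives Corollary \ref{A'(n,d,1)} precisely by specializing Proposition \ref{easy} to $e=1$, and your case split with the two binomial evaluations (plus the consistency checks via parts (ii) and (iii)) is the same computation spelled out in slightly more detail.
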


\section{Determining $A'(n,2e-2,e)$ and $A'(n,2e-3,e)$}

By Proposition \ref{Elias}, we have $A'(n,2e-2,e)=A'(n-1,2e-3,e)$, so it suffices to focus
on the case $d=2e-2$ in this section.
When $d=2e-2$, $A'(n,d,e)$ can be expressed in terms of $A(n,d,e)$.

%

\begin{proposition}
\label{d=2e-2}
When $d=2e-2$,
\begin{equation*}
A'(n,d,e) = 
\max_{\substack{\alpha\in\{0,1\} \\ 0\leq\beta\leq \frac{n}{e-1} \\ \alpha\beta=0}}
\{\alpha+\beta+A(n-\alpha(e-2)-\beta(e-1),d,e)\}.
\end{equation*}
\end{proposition}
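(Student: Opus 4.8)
The plan is to prove the identity by matching an upper bound on the size of any $(n,2e-2,e)$ bounded-weight code against a matching construction. Both directions hinge on the elementary relation
\[
\Delta(\vu,\vv)=\wt(\vu)+\wt(\vv)-2\,|\supp(\vu)\cap\supp(\vv)|,
\]
so that the minimum-distance requirement $\Delta(\vu,\vv)\geq d=2e-2$ becomes $\wt(\vu)+\wt(\vv)-2\,|\supp(\vu)\cap\supp(\vv)|\geq 2e-2$. For the upper bound I would fix an optimal code $\C$ with $|\C|\geq 2$ (the case $|\C|\leq 1$ is subsumed, since the right-hand side is at least $A(n,d,e)\geq 1$ whenever $n\geq e$). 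Since each weight is at most $e$ and $|\supp(\vu)\cap\supp(\vv)|\geq 0$, the relation forces every codeword to have weight at least $2e-2-e=e-2$; hence every codeword has weight in $\{e-2,e-1,e\}$, and I partition $\C=\C_e\sqcup\C_{e-1}\sqcup\C_{e-2}$ accordingly. Substituting the possible weight pairs into the same relation yields the intersection pattern: two weight-$e$ codewords meet in at most one point; a weight-$e$ codeword is disjoint from every codeword of weight $e-1$ or $e-2$; any two codewords of weight at most $e-1$ are disjoint; a weight-$(e-1)$ and a weight-$(e-2)$ codeword cannot coexist; and $\C_{e-2}$ contains at most one codeword. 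Writing $\alpha=|\C_{e-2}|\in\{0,1\}$ and $\beta=|\C_{e-1}|$, these facts give $\alpha\beta=0$, and the mutual disjointness of the $\beta+\alpha$ low-weight supports forces $\alpha(e-2)+\beta(e-1)\leq n$, in particular $\beta\leq n/(e-1)$.

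The crucial structural step is to delete the $\alpha(e-2)+\beta(e-1)$ coordinates occupied by the codewords of $\C_{e-1}$ and $\C_{e-2}$. Because those coordinates are disjoint from the support of every codeword in $\C_e$, deleting them leaves each weight-$e$ codeword intact, and the surviving codewords still pairwise meet in at most one point. Thus $\C_e$ is, after deletion, an $(n-\alpha(e-2)-\beta(e-1),2e-2,e)$ constant-weight code, whence $|\C_e|\leq A(n-\alpha(e-2)-\beta(e-1),d,e)$. Adding back the low-weight codewords gives $|\C|=\alpha+\beta+|\C_e|$, which is at most the displayed maximum.

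For the matching lower bound I would fix a pair $(\alpha,\beta)$ attaining the maximum, put $n'=n-\alpha(e-2)-\beta(e-1)$, realize an optimal constant-weight code of size $A(n',d,e)$ on a chosen set of $n'$ coordinates, and, on the remaining $\alpha(e-2)+\beta(e-1)$ coordinates, adjoin either $\beta$ pairwise disjoint blocks of size $e-1$ (when $\alpha=0$) or one block of size $e-2$ (when $\beta=0$); the bound $\beta\leq n/(e-1)$ guarantees that enough coordinates are available. A direct verification shows every new cross-distance meets the requirement: disjoint supports of weights $e$ and $e-1$ are at distance $2e-1$, of weights $e$ and $e-2$ at distance $2e-2$, and two disjoint weight-$(e-1)$ supports at distance $2e-2$, while distances internal to the constant-weight code are at least $2e-2$ by construction. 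This yields a valid $(n,2e-2,e)$ bounded-weight code of size $\alpha+\beta+A(n',d,e)$, matching the upper bound.

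The main obstacle is the upper-bound analysis: converting the single distance inequality into the rigid decomposition $\C=\C_e\sqcup\C_{e-1}\sqcup\C_{e-2}$ with $\alpha\beta=0$ and, above all, establishing that the supports of the low-weight codewords are disjoint from one another and from every weight-$e$ support, since it is precisely this disjointness that licenses the deletion reducing $\C_e$ to a constant-weight code. The remaining work is routine bookkeeping. Two degenerate points should be dispatched explicitly: the trivial regime $|\C|\leq 1$ (and $n<e$), and the behaviour at small $e$, where for $e=2$ the ``weight-$(e-2)$'' codeword is the zero vector $\vzero$, occupying no coordinate, so that $\alpha(e-2)=0$ and the formula remains valid.
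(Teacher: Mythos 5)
Your proposal is correct and follows essentially the same route as the paper: derive the weight restriction to $\{e-2,e-1,e\}$, the constraints $\alpha\in\{0,1\}$, $\alpha\beta=0$, and the pairwise disjointness of the low-weight supports (from each other and from the weight-$e$ supports), then shorten at those coordinates to reduce to a constant-weight code. You are somewhat more explicit than the paper about the matching construction for the lower bound and the degenerate cases, but the core argument is identical.
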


\begin{proof}
By considering the distance between codewords, we see that
an $(n,2e-2,e)$ bounded-weight code $\C$ must satisfy:
\begin{enumerate}[(i)]
\item $\wt(\vu)\geq e-2$ for all $\vu\in\C$;
\item there exists at most one $\vu\in\C$ such that $\wt(\vu)=e-2$;
\item if there exists $\vu\in\C$ such that $\wt(\vu)=e-2$, then there do not exist any $\vv\in\C$
such that $\wt(\vv)=e-1$.
\end{enumerate}
Let $\alpha\in\{0,1\}$ be the number of codewords in $\C$ of weight $e-2$,
and let $0\leq\beta\leq n/(e-1)$ be the number of codewords in $\C$ of weight $e-1$.
Note that $\alpha\beta=0$ by property (iii) above.
The supports of these $\alpha+\beta$ codewords of weight $e-2$ and $e-1$
are pairwise disjoint, and are also pairwise disjoint
from those of codewords of weight $e$. Hence, if we shorten $\C$ at the positions
containing nonzero values among these $\alpha+\beta$ codewords, we end up with
an $(n-\alpha(e-2)-\beta(e-1),2e-2,e)$ constant-weight code. It follows that
\begin{equation*}
A'(n,2e-2,e) = 
\max_{\substack{\alpha\in\{0,1\} \\ 0\leq\beta\leq \frac{n}{e-1} \\ \alpha\beta=0}}
\{\alpha+\beta+A(n-\alpha(e-2)-\beta(e-1),2e-2,e)\}. 
\end{equation*}
\qed
\end{proof}

Hence, for any fixed $e$, we can determine
$A'(n,2e-2,e)$ whenever the exact value of $A(n,2e-2,e)$ is known for all $n$. The following
are classical results from the theory of constant-weight codes.

\begin{theorem}[Sch\"onheim \cite{Schonheim:1966}, Spencer\cite{Spencer:1968},
Brouwer \cite{Brouwer:1979}]
\label{classicalcwc}
\hfill
\begin{enumerate}[(i)]
\item
\begin{equation*}
A(n,4,3)=\begin{cases}
\left\lfloor \frac{n}{3} \left\lfloor \frac{n-1}{2} \right\rfloor \right\rfloor-1,&\text{if $n\equiv 5\bmod{6}$}\\
\left\lfloor \frac{n}{3} \left\lfloor \frac{n-1}{2} \right\rfloor \right\rfloor,&\text{otherwise}.
\end{cases}
\end{equation*}
\item
\begin{equation*}
A(n,6,4)=\begin{cases}
\left\lfloor \frac{n}{4} \left\lfloor \frac{n-1}{3}\right\rfloor \right\rfloor-1,&\text{if $n\equiv 7,10\bmod{12}$ and $n\not\in\{10,19\}$} \\
\left\lfloor \frac{n}{4} \left\lfloor \frac{n-1}{3}\right\rfloor \right\rfloor-1,&\text{if $n\in\{9,17\}$}\\
\left\lfloor \frac{n}{4} \left\lfloor \frac{n-1}{3}\right\rfloor \right\rfloor-2,&\text{if $n\in\{8,10,11\}$}\\
\left\lfloor \frac{n}{4} \left\lfloor \frac{n-1}{3}\right\rfloor \right\rfloor-3,&\text{if $n=19$} \\
\left\lfloor \frac{n}{4} \left\lfloor \frac{n-1}{3}\right\rfloor \right\rfloor,&\text{otherwise}. \\
\end{cases}
\end{equation*}
\end{enumerate}
\end{theorem}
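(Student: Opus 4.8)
The plan is to recast each statement as a packing problem in design theory and attack it with a Johnson-type counting bound together with explicit block constructions. First I would record the elementary fact that in an $(n,2e-2,e)$ constant-weight code any two distinct codewords $\vu,\vv$ satisfy $\Delta(\vu,\vv)=2\bigl(e-|\supp(\vu)\cap\supp(\vv)|\bigr)\ge 2e-2$, which forces $|\supp(\vu)\cap\supp(\vv)|\le 1$. Passing to the associated set system, the codewords become $e$-subsets of $[n]$ that meet pairwise in at most one point; equivalently, every pair of points of $[n]$ is contained in at most one block. Thus $A(n,2e-2,e)$ is exactly the maximum number of blocks in such a packing, for $e=3$ in part~(i) and $e=4$ in part~(ii).

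For the generic upper bound I would iterate the standard packing estimate. Fixing a point $p\in[n]$, the blocks through $p$ use pairwise disjoint sets of $e-1$ further points, so at most $\lfloor(n-1)/(e-1)\rfloor$ blocks pass through $p$; summing over all $n$ points counts each block $e$ times and yields
\begin{equation*}
A(n,2e-2,e)\le\left\lfloor\frac{n}{e}\left\lfloor\frac{n-1}{e-1}\right\rfloor\right\rfloor .
\end{equation*}
Specialising to $e=3$ and $e=4$ gives the ``otherwise'' entries in the theorem.

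To obtain the sharper bounds in the exceptional residue classes I would analyse the \emph{leave}, the graph $L$ on $[n]$ whose edges are exactly the pairs covered by no block. Since each block covers $\binom{e}{2}$ pairs, the number of blocks equals $\bigl(\binom{n}{2}-|E(L)|\bigr)/\binom{e}{2}$, so maximising the code amounts to minimising $|E(L)|$ subject to the degree and divisibility constraints that any leave must satisfy. For $e=3$ and $n\equiv 5\pmod 6$, for instance, $n$ is odd so every vertex of $L$ has even degree, while $\binom{n}{2}\equiv 1\pmod 3$ forces $|E(L)|\equiv 1\pmod 3$; the smallest even graph of that cardinality is a quadrilateral, so $|E(L)|\ge 4$, which is precisely the stated decrease by one. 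The more intricate corrections for $e=4$ come from the analogous but parity-sensitive leave analysis, supplemented by a direct check of the sporadic orders $n\in\{8,9,10,11,17,19\}$, which pins down the $-1$, $-2$, and $-3$ terms in those rows.

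The hard part, and the bulk of the work, will be exhibiting packings that attain these bounds for every admissible $n$. For $e=3$ the extremal objects are Steiner triple systems when $n\equiv 1,3\pmod 6$ (Kirkman), with near-optimal packings built for the remaining residues by adjoining to or deleting blocks from a system of nearby order; for $e=4$ the role of the perfect packing is played by the Steiner system $S(2,4,n)$, which exists precisely for $n\equiv 1,4\pmod{12}$ (Hanani), the other residues requiring tailored optimal-packing constructions. I expect this constructive existence---rather than the counting bounds---to be the principal obstacle, since it rests on the full design-theoretic machinery of Sch\"onheim, Spencer, and Brouwer; the counting arguments above only certify that no packing can do better.
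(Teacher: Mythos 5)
The first thing to note is that the paper does not prove this statement at all: Theorem~\ref{classicalcwc} is imported verbatim from the cited literature (Sch\"onheim, Spencer, Brouwer), so there is no in-paper argument to compare yours against. Judged on its own terms, your outline correctly identifies the architecture of the classical proofs. The translation of $(n,2e-2,e)$ constant-weight codes into packings of $e$-subsets of $[n]$ meeting pairwise in at most one point is right, the iterated Johnson bound is the standard upper bound, and your leave argument for $n\equiv 5\pmod 6$ in part (i) is sound and essentially complete: the leave graph is even because $n-1$ is even, its edge count is $\equiv 1\pmod 3$ because $\binom{n}{2}\equiv 1\pmod 3$ while each block covers $3$ pairs, and the smallest nonempty graph with all degrees even and edge count $\equiv 1\pmod 3$ has four edges, which converts exactly into the stated decrease by one.

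The genuine gap is the lower bound, i.e., the constructions, and you are candid that you have not supplied them. For part (i) this means exhibiting, for every $n\equiv 5\pmod 6$, a triple packing whose leave is a $4$-cycle, together with optimal packings in the remaining residue classes (Kirkman plus Spencer); for part (ii) it means Brouwer's full determination of the maximum $K_4$-packings of $K_n$, including the minimum-leave arguments that explain why the classes $n\equiv 7,10\pmod{12}$ lose one block and why the sporadic orders $8,9,10,11,17,19$ lose more --- your sketch does not yet even give the nonconstructive half of those refinements. That material occupies entire papers and cannot be recovered from the counting bounds alone, so what you have is a correct road map rather than a proof. Since the paper treats the theorem as a black box, citation is also the appropriate resolution here; the only portion you could legitimately present as self-contained is the generic Johnson bound together with your $n\equiv 5\pmod 6$ upper-bound argument in part (i).
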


Note that $A(n,d-1,e)=A(n,d,e)$ when $d$ is even, so results on $A(n,d,e)$ are normally
stated only for even $d$.

%

\begin{corollary}
\label{A'(n,4,3)}
For $n$ a positive integer, $A'(n,4,3)=A(n,4,3)$, except
when $n\in\{1,2,4\}$, in which case, we have
$A'(1,4,3)=A'(2,4,3)=1$ and $A'(4,4,3)=2$.
\end{corollary}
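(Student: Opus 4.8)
The plan is to specialize Proposition~\ref{d=2e-2} to $e=3$, $d=4=2e-2$. Since then $e-2=1$ and $e-1=2$, the proposition gives
\[
A'(n,4,3) = \max_{\substack{\alpha\in\{0,1\},\ 0\le\beta\le n/2 \\ \alpha\beta = 0}} \{\alpha + \beta + A(n-\alpha-2\beta, 4, 3)\}.
\]
The term indexed by $\alpha=\beta=0$ is exactly $A(n,4,3)$, so it suffices to show that for every $n\notin\{1,2,4\}$ no other admissible pair $(\alpha,\beta)$ produces a strictly larger value; the three exceptional lengths will be settled separately by hand. Throughout I will use the combinatorial picture in which an $(m,4,3)$ constant-weight code is a packing of triples of $[m]$ in which any two triples meet in at most one point, and the closed form of Theorem~\ref{classicalcwc}(i).

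First I would dispose of the term with $\alpha=1$, $\beta=0$, namely $1+A(n-1,4,3)$, by proving the monotonicity estimate $A(n,4,3)\ge 1+A(n-1,4,3)$ for $n\ge 3$. Starting from an optimal packing on $[n-1]$, I would try to adjoin one triple through the new point $n$: if some pair of $[n-1]$ lies in no triple, then completing that pair with $n$ gives an admissible new triple and the bound follows. The delicate case is when the optimal packing covers every pair, i.e.\ is a Steiner triple system, which occurs exactly for $n-1\equiv 1,3\pmod 6$; here no triple through $n$ can be added directly, and I would instead verify $\tfrac{(n-1)(n-2)}{6}+1\le \lfloor \tfrac{n}{3}\lfloor\tfrac{n-1}{2}\rfloor\rfloor$ from the formula. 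This inequality holds for all relevant $n\ge 5$ and fails precisely at $n=4$, where $A(4,4,3)=A(3,4,3)=1$; this failure is the origin of one exception.

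Next I would treat the terms with $\alpha=0$ and $\beta\ge 1$, namely $\beta+A(n-2\beta,4,3)$, using the increment lemma $A(m,4,3)\ge A(m-2,4,3)+1$ for $m\ge 3$: given an optimal packing on $[m-2]$, adjoining the single triple $\{1,m-1,m\}$ on two fresh points meets every old triple in at most the point $1$, hence is admissible. When $n-2\beta\ge 1$, telescoping this lemma along $n,n-2,\ldots,n-2\beta$ (every index being at least $3$) yields $A(n,4,3)\ge\beta+A(n-2\beta,4,3)$ at once. The only remaining case is the endpoint $\beta=n/2$ with $n$ even, where the term equals $n/2$; since $A(n,4,3)=\lfloor n(n-2)/6\rfloor$ for even $n$, the inequality $n/2\le A(n,4,3)$ reduces to $(n-6)(n+1)\ge 0$ and so holds for all even $n\ge 6$, while it fails at $n=4$ and $n=2$.

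Combining these bounds shows the maximum is attained at $\alpha=\beta=0$ whenever $n=3$ or $n\ge 5$, giving $A'(n,4,3)=A(n,4,3)$ there. It then remains to record the exceptions directly from the definition: for $n\in\{1,2\}$ no two vectors of weight at most $3$ can lie at distance $4$, so $A'(1,4,3)=A'(2,4,3)=1$, while for $n=4$ the code $\{1100,0011\}$ (equivalently $\{1000,0111\}$) gives $A'(4,4,3)=2$. I expect the main obstacle to be the Steiner-triple-system obstruction in the $\alpha=1$ estimate: the clean ``add one triple'' argument breaks down exactly when the shorter ground set carries a perfect design, and it is this breakdown at length $3$ that forces $n=4$ out of the general statement as an exception.
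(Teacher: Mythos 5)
Your proposal is correct, and it follows the paper's skeleton up to the key step: both specialize Proposition~\ref{d=2e-2} to $e=3$, reduce to showing that the terms with $\alpha=1$ or $\beta\ge 1$ never beat $A(n,4,3)$ for $n\ge 5$, and settle $n\le 4$ by hand. Where you genuinely diverge is in how the $\beta\ge 1$ terms are eliminated. The paper argues ``top down'': it takes a bounded-weight code attaining $\beta+A(n-2\beta,4,3)$, chooses a coordinate outside the supports of the $\beta$ weight-two codewords (possible when $2\beta\ne n$), and flips that coordinate to one in each of them; a short distance check shows the result is an $(n,4,3)$ constant-weight code of the same size, so $\beta+A(n-2\beta,4,3)\le A(n,4,3)$ in one stroke, with no induction. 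You argue ``bottom up'' instead, proving the increments $A(m,4,3)\ge A(m-2,4,3)+1$ (adjoin $\{1,m-1,m\}$ on two fresh points) and $A(n,4,3)\ge A(n-1,4,3)+1$ (adjoin a triple through an uncovered pair, falling back on the closed formula in the Steiner-triple-system case) and telescoping. Your route costs extra case analysis --- notably the Steiner obstruction for the $\alpha=1$ term --- but it buys something the paper omits: the paper merely asserts ``since $n\ge5$, $1+A(n-1,4,3)\le A(n,4,3)$,'' whereas you actually verify it and locate precisely where $n=4$ breaks. Both treatments must, and do, handle the endpoint $2\beta=n$ separately. Two cosmetic slips on your side: you first state the monotonicity estimate ``for $n\ge 3$'' before correctly noting it fails at $n=4$, and the quadratic $(n-6)(n+1)\ge 0$ you quote for the $\beta=n/2$ endpoint is not literally the reduced inequality, though it yields the correct threshold $n\ge 6$. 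Neither affects the validity of the argument.
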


\begin{proof}
The values of $A'(n,4,3)$ are trivial to obtain when $n\leq 4$, so we assume 
henceforth that $n\geq 5$.
From Proposition \ref{d=2e-2} and Theorem \ref{classicalcwc}(ii), we have
\begin{align*}
A'(n,4,3)&=\max_{\substack{\alpha\in\{0,1\} \\ 0\leq\beta\leq n/2 \\ \alpha\beta=0}}
\{\alpha+\beta+A(n-\alpha-2\beta,4,3)\} \\
&=\max_{0\leq\beta\leq n/2}
\{\beta+A(n-2\beta,4,3), 1+A(n-1,4,3)\}.
\end{align*}

First, note that since $n\geq 5$, we have $1+A(n-1,4,3)\leq A(n,4,3)$.

Now, take any $(n,4,3)$ bounded-weight code $\C$ attaining the bound
$\beta+A(n-2\beta,4,3)$, that is, $\C$ contains $\beta$ codewords of weight two
and then outside the supports of these codewords of weight two, $\C$ contains the
remaining codewords of weight three. Without loss of generality, assume
that the last coordinate position does not belong to any of the supports of the
codewords of weight two. We consider two cases:

\begin{description}
\item[$2\beta\not= n$:] Change the last coordinate (from zero to one)
in every codeword of weight two in $\C$
to obtain a new code $\C'$. Clearly, the distances between the codewords in $\C'$ obtained
from those of weight two in $\C$ are unaffected, and remain at least four.
The distances between codewords of weight two and codewords of weight three in $\C$ are
all five, so the distances between codewords in $\C'$ obtained from codewords of weight two in $\C$
and the other codewords in $\C$ are all at least four. Therefore, $\C'$ is an
$(n,4,3)$ constant-weight code, and hence its size can never exceed $A(n,4,3)$.
Since $|\C|=|\C'|$, we also have $|\C|\leq A(n,4,3)$. Furthermore, equality holds when
$\beta=0$.

\item[$2\beta=n$:] In this case, $\beta+A(n-2\beta,4,3)=n/2+A(0,4,3)\leq A(n,4,3)$ when
$n\geq 6$.
\end{description}
It follows that $A'(n,4,3)=A(n,4,3)$ when $n\geq 5$.
\qed
\end{proof}

We skip the proof for Corollary \ref{A'(n,6,4)} below as it is similar to that for
Corollary \ref{A'(n,4,3)}.

\begin{corollary}
\label{A'(n,6,4)}
$A'(n,6,4)=A(n,6,4)$ for all positive integers $n$, except
when $n\in\{1,2,3,6\}$, in which case, we have
$A'(n,6,4)=1$ for $n\in\{1,2,3\}$ and $A'(6,6,4)=2$.
\end{corollary}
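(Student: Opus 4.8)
The plan is to mirror the proof of Corollary \ref{A'(n,4,3)}, specializing Proposition \ref{d=2e-2} to $e=4$, $d=6$. Here $e-2=2$ and $e-1=3$, so Proposition \ref{d=2e-2} (after separating the $\alpha=1$ and $\alpha=0$ branches) reads
\[
A'(n,6,4) = \max_{0\le\beta\le n/3}\bigl\{\beta + A(n-3\beta,6,4),\ 1 + A(n-2,6,4)\bigr\}.
\]
First I would dispose of the small orders by direct inspection. For $n\in\{1,2,3\}$ no two vectors of weight at most $4$ can lie at distance $6$, so $A'(n,6,4)=1$; for $n=6$ the choice $\beta=2$ produces two weight-$3$ codewords on disjoint supports, giving $A'(6,6,4)=2$ even though $A(6,6,4)=1$. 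One also checks directly that $n=4,5$ already satisfy $A'(n,6,4)=A(n,6,4)$, so the only genuine exceptions are $\{1,2,3,6\}$.

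For $n\ge 7$ I would next establish the analog of the inequality used in Corollary \ref{A'(n,4,3)}, namely $1+A(n-2,6,4)\le A(n,6,4)$. This follows from Theorem \ref{classicalcwc}(ii) together with the eventual monotone growth of $A(n,6,4)$, the few borderline orders $n\in\{7,8,9\}$ being verified by hand (each gives equality). This lets me discard the $\alpha=1$ branch and reduces the task to bounding $\beta+A(n-3\beta,6,4)$ for $0\le\beta\le n/3$.

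The heart of the argument is the padding step. Given a code $\C$ attaining $\beta+A(n-3\beta,6,4)$ — that is, $\beta$ codewords of weight $3$ together with an optimal $(n-3\beta,6,4)$ constant-weight code on the complementary coordinates — I would split on whether $3\beta=n$. If $3\beta\ne n$, some coordinate $c$ lies outside every weight-$3$ support; adding $c$ to each weight-$3$ codeword turns them all into weight-$4$ words. The key observation, forced by $\dist(\C)=6$ and weight at most $4$, is that any two weight-$3$ codewords have disjoint supports, as do any weight-$3$ and any weight-$4$ codeword. Hence after adding the common coordinate $c$ the pairwise distances among the former weight-$3$ words stay at $4+4-2=6$, and their distances to the weight-$4$ words remain $\ge 6$ (they equal $6$ or $8$ according to whether $c$ lies in the other support). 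The resulting code $\C'$ is therefore an $(n,6,4)$ constant-weight code with $|\C'|=|\C|$, so $\beta+A(n-3\beta,6,4)\le A(n,6,4)$, with equality at $\beta=0$. If instead $3\beta=n$, the bound reads $n/3\le A(n,6,4)$, which holds for every $n\ge 9$ divisible by $3$ (again by Theorem \ref{classicalcwc}(ii)) and fails only at $n=6$.

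Combining these steps yields $A'(n,6,4)=A(n,6,4)$ for all $n\notin\{1,2,3,6\}$. I expect the main obstacle to be the bookkeeping at the small orders: one must pin down exactly where the two auxiliary inequalities $1+A(n-2,6,4)\le A(n,6,4)$ and $n/3\le A(n,6,4)$ first hold, and confirm that the only surviving discrepancy from $A(n,6,4)$ is the value $A'(6,6,4)=2$ coming from the $\beta=n/3$ configuration.
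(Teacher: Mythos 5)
Your proposal is correct and is essentially the proof the paper intends: the paper explicitly omits the argument for Corollary \ref{A'(n,6,4)}, stating only that it is similar to that of Corollary \ref{A'(n,4,3)}, and your adaptation (specializing Proposition \ref{d=2e-2} to $e=4$, $d=6$, discarding the $\alpha=1$ branch via $1+A(n-2,6,4)\le A(n,6,4)$, and padding the disjoint weight-$3$ supports with a common free coordinate to land in an $(n,6,4)$ constant-weight code) is exactly that adaptation, with the small orders and the $3\beta=n$ degenerate case handled correctly.
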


We end this section by giving the exact value of $A'(n,d,2)$ and $A'(n,d,3)$.

\begin{theorem}
\label{A'(n,d,2)}
\begin{equation*}
A'(n,d,2)=
\begin{cases}
1,&\text{if $d\geq 5$} \\
\left\lfloor n/2\right\rfloor,&\text{if $d=4$} \\
\left\lfloor (n+1)/2\right\rfloor,&\text{if $d=3$} \\
\binom{n}{2}+1,&\text{if $d=2$} \\
\binom{n}{2}+n+1,&\text{if $d\leq 1$.}
\end{cases}
\end{equation*}
\end{theorem}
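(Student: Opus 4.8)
The plan is to recognize that Theorem~\ref{A'(n,d,2)} is simply the specialization of Proposition~\ref{easy} to the case $e=2$, after simplifying the two binomial sums. Since $e=2$, the thresholds appearing in Proposition~\ref{easy} become $2e+1=5$, $2e=4$, and $2e-1=3$, which align exactly with the five regimes $d\geq 5$, $d=4$, $d=3$, $d=2$, and $d\leq 1$ of the theorem. So no new combinatorial argument is needed; the entire proof is a case-by-case invocation of Proposition~\ref{easy} together with elementary binomial arithmetic. I do not expect any genuine obstacle here — the only step with any content at all is verifying the two closed forms, and these reduce to Pascal's identity.

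Concretely, I would argue as follows. For $d\geq 5$, apply Proposition~\ref{easy}(i) with $e=2$ (so $d\geq 2e+1$) to get $A'(n,d,2)=1$. For $d=4=2e$, Proposition~\ref{easy}(ii) yields $\lfloor n/e\rfloor=\lfloor n/2\rfloor$. For $d=3=2e-1$, Proposition~\ref{easy}(iii) gives $\lfloor (n+1)/e\rfloor=\lfloor (n+1)/2\rfloor$. For $d=2$, Proposition~\ref{easy}(iv) gives $\sum_{k=0}^{2}\binom{n-1}{k}$, and for $d=1$, Proposition~\ref{easy}(v) gives $\sum_{k=0}^{2}\binom{n}{k}$. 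The case $d\leq 1$ is handled by the $d=1$ formula, since for any $d\leq 1$ the distance constraint $\Delta(\vu,\vv)\geq d$ is automatically met by distinct vectors, so the optimal code is exactly the set of all vectors of weight at most $e=2$, of size $\sum_{k=0}^{2}\binom{n}{k}$.

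It then remains only to record the two simplifications, which I would present in display form:
\begin{equation*}
\sum_{k=0}^{2}\binom{n-1}{k}=1+(n-1)+\binom{n-1}{2}=\binom{n}{2}+1,
\end{equation*}
\begin{equation*}
\sum_{k=0}^{2}\binom{n}{k}=1+n+\binom{n}{2}=\binom{n}{2}+n+1.
\end{equation*}
The first identity follows from $\binom{n-1}{2}=\binom{n}{2}-(n-1)$ (a single application of Pascal's rule), which cancels the linear term and leaves $\binom{n}{2}+1$; the second is immediate. Matching each of the five regimes to the corresponding case of Proposition~\ref{easy} and substituting these closed forms completes the verification, so the ``hard part'' is really just bookkeeping to confirm that the thresholds and formulas line up for $e=2$.
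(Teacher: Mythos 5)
Your proposal is correct and matches the paper's proof, which consists of the single line ``The value of $A'(n,d,2)$ follows from Proposition \ref{easy}''; you have simply made explicit the case matching ($2e+1=5$, $2e=4$, $2e-1=3$ for $e=2$) and the binomial simplifications that the paper leaves implicit. No further comment is needed.
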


\begin{proof}
The value of $A'(n,d,2)$ follows from Proposition \ref{easy}.
\qed
\end{proof}

\begin{theorem}
\label{A'(n,d,3)}
When $(n,d)\not\in\{(1,3)$, $(1,4)$, $(2,4)$, $(3,3)$, $(4,4)\}$, we have
\begin{equation*}
A'(n,d,3)=
\begin{cases}
1,&\text{if $d\geq 7$} \\
\left\lfloor n/3\right\rfloor,&\text{if $d=6$} \\
\left\lfloor (n+1)/3\right\rfloor,&\text{if $d=5$} \\
\left\lfloor \frac{n}{3} \left\lfloor \frac{n-1}{2}\right\rfloor\right\rfloor,&\text{if $d=4$, $n\not\equiv 5\bmod{6}$, $n\not\in\{1,2,4\}$} \\
\left\lfloor \frac{n(n-1)}{6} \right\rfloor-1,&\text{if $d=4$, $n\equiv 5\bmod{6}$} \\
\left\lfloor \frac{n+1}{3} \left\lfloor \frac{n}{2}\right\rfloor\right\rfloor,&\text{if $d=3$, $n\not\equiv 4\bmod{6}$, $n\not\in\{1,3\}$} \\
\left\lfloor \frac{(n+1)n}{6} \right\rfloor-1,&\text{if $d=3$, $n\equiv 4\bmod{6}$} \\\sum_{k=0}^3\binom{n-1}{k},&\text{if $d=2$} \\
\sum_{k=0}^3\binom{n}{k},&\text{if $d\leq 1$.}
\end{cases}
\end{equation*}
The remaining values are given by
\begin{align*}
A'(1,3,3) = A'(1,4,3) = A'(2,4,3) = 1, \\
A'(3,3,3) = A'(4,4,3) = 2.
\end{align*}
\end{theorem}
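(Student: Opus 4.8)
The plan is to assemble the theorem entirely from exact values already established, treating the ranges of $d$ separately. The observation that drives everything is that for $e=3$ every distance from $1$ up through $7$ (and beyond) has already been pinned down: either by the elementary Proposition \ref{easy}, by the constant-weight reduction of Corollary \ref{A'(n,4,3)}, or by the parity shift of Proposition \ref{Elias}. So the ``proof'' is really a dispatch into the correct prior result for each value of $d$, followed by a bookkeeping check that the closed forms line up with the displayed expressions.

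First I would dispose of the two extremes, which need no work beyond a citation. For $d\geq 7=2e+1$, part (i) of Proposition \ref{easy} gives $A'(n,d,3)=1$; for $d=6=2e$, part (ii) gives $\left\lfloor n/3\right\rfloor$; and for $d=5=2e-1$, part (iii) gives $\left\lfloor (n+1)/3\right\rfloor$. At the low end, $d=2$ and $d\leq 1$ are exactly parts (iv) and (v), yielding $\sum_{k=0}^3\binom{n-1}{k}$ and $\sum_{k=0}^3\binom{n}{k}$.

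The substantive cases are $d=4$ and $d=3$. For $d=4=2e-2$, Corollary \ref{A'(n,4,3)} identifies $A'(n,4,3)$ with the constant-weight number $A(n,4,3)$ for $n\notin\{1,2,4\}$, and Theorem \ref{classicalcwc}(i) supplies the closed form for $A(n,4,3)$. The one subtlety is to reconcile the subtracted-one case of Theorem \ref{classicalcwc}(i) with the displayed form $\left\lfloor n(n-1)/6\right\rfloor-1$; this rests on the floor identity
\begin{equation*}
\left\lfloor \frac{n}{3}\left\lfloor \frac{n-1}{2}\right\rfloor\right\rfloor = \left\lfloor \frac{n(n-1)}{6}\right\rfloor \qquad (n\equiv 5 \bmod 6),
\end{equation*}
which is immediate since then $n$ is odd and $n(n-1)\equiv 2\bmod 6$. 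For $d=3=2e-3$, I would invoke Proposition \ref{Elias}: as $3$ is odd, $A'(n,3,3)=A'(n+1,4,3)$. Substituting $n\mapsto n+1$ into the $d=4$ formulas converts the congruence $n+1\equiv 5\bmod 6$ into $n\equiv 4\bmod 6$ and produces the two displayed expressions $\left\lfloor \frac{n+1}{3}\left\lfloor \frac{n}{2}\right\rfloor\right\rfloor$ and $\left\lfloor (n+1)n/6\right\rfloor-1$.

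Finally, the exceptional list $\{(1,3),(1,4),(2,4),(3,3),(4,4)\}$ is handled by the same two devices. The pairs $(1,4),(2,4),(4,4)$ are precisely the exceptions flagged in Corollary \ref{A'(n,4,3)}, with values $1,1,2$; and applying $A'(n,3,3)=A'(n+1,4,3)$ to the exceptional $d=4$ arguments $n+1\in\{2,4\}$ gives $A'(1,3,3)=A'(2,4,3)=1$ and $A'(3,3,3)=A'(4,4,3)=2$. I expect no genuine obstacle here: the whole argument is a case-by-case citation of earlier results, and the only care required is tracking the modular conditions under the shift $n\mapsto n+1$ together with the floor identity displayed above.
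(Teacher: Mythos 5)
Your proposal is correct and follows the same route as the paper: cite Proposition \ref{easy} for $d\geq 5$ and $d\leq 2$, Corollary \ref{A'(n,4,3)} (with Theorem \ref{classicalcwc}(i)) for $d=4$, and Proposition \ref{Elias} to transfer the $d=4$ result to $d=3$. The extra bookkeeping you supply --- the floor identity for $n\equiv 5\bmod 6$ and the tracking of exceptional values under the shift $n\mapsto n+1$ --- is exactly the verification the paper leaves implicit.
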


\begin{proof}
The value of $A'(n,d,3)$ for $d\geq 5$ and $d\leq 2$ follows from Proposition \ref{easy}, and
for $d=4$ follows from Corollary \ref{A'(n,4,3)}. The remaining required value of
$A'(n,3,3)$ follows by applying Proposition \ref{Elias} to Corollary \ref{A'(n,4,3)}.
\qed
\end{proof}

\section{Determining $A'(n,4,4)$}

We consider the case $e=4$ and $d=4$ in this section,
since $A'(n,d,e)$ has already been determined for all $e\leq 3$ in
Corollary \ref{A'(n,d,1)}, Theorem \ref{A'(n,d,2)} and Theorem \ref{A'(n,d,3)},
and for $e=4$ and $d=6$ (and hence also $d=5$) in Corollary \ref{A'(n,6,4)}.
The main result in this section is that $A'(n,4,4)=A(n,4,4)+1$ for all sufficiently large $n$.

\begin{table}
\renewcommand{\arraystretch}{1.3}
\caption{Elements of $\N$, the possible exceptions to $A(n,4,4)=J(n,4,4)$}
\label{exceptions}
\centering
\begin{tabular}{rrrrrrrrrrc}
\hline
$23$ & $35$ & $47$ & $59$ & $71$ & $83$ & $95$ & $119$ & $143$ & $155$ & $167$ \\
$179$ & $191$ & $203$ & $215$ & $275$ & $287$ & $455$ & $467$ & $479$ & $959$ \\
\hline
\end{tabular}
\end{table}

Let
\begin{equation*}
J(n,4,4) = 
\begin{cases}
\left\lfloor \frac{n}{4} \left\lfloor \frac{n-1}{3} \left\lfloor \frac{n-2}{2} \right\rfloor \right\rfloor \right\rfloor, &\text{if $n\not\equiv 0\bmod{6}$} \\
\left\lfloor \frac{n}{4} \left( \left\lfloor \frac{n-1}{3} \left\lfloor \frac{n-2}{2}\right\rfloor \right\rfloor -1\right)\right\rfloor,&\text{if $n\equiv 0\bmod{6}$.}
\end{cases}
\end{equation*}
The following bound was obtained by Johnson \cite{Johnson:1972}:
\begin{equation}
A(n,4,4) \leq J(n,4,4).
\end{equation}
With recent advances, the value of $A(n,4,4)$ is now known for all but  the values of $n\in\N$,
where $\N$ is the set of 21 numbers in Table \ref{exceptions}.

\begin{theorem}[Hanani \cite{Hanani:1960}, Brouwer \cite{Brouwer:1978}, Ji \cite{Ji:2006}]
\label{Ji}
$A(n,4,4)=J(n,4,4)$ for all positive integers $n$, except possibly for
$n\in\N$.
\end{theorem}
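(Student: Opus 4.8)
The plan is to recast the constant-weight code as a packing design and then meet the stated bound $J(n,4,4)$ by construction. First I would invoke the set-system correspondence: two weight-$4$ vectors lie at Hamming distance at least $4$ exactly when their supports share at most two points, so an $(n,4,4)$ constant-weight code is a family of $4$-subsets of $[n]$ in which every $3$-subset lies in at most one block---a partial Steiner quadruple system of order $n$. To recover the upper bound $A(n,4,4)\leq J(n,4,4)$, fix a point $x$ and delete it from every block through $x$; the resulting triples pairwise meet in at most one point, so they form an $(n-1,4,3)$ constant-weight code and their number is at most $A(n-1,4,3)$ from Theorem \ref{classicalcwc}(i). Summing this local-degree bound over all $n$ points and dividing by $4$ gives $\lfloor \frac{n}{4}A(n-1,4,3)\rfloor = J(n,4,4)$, the $n\equiv0\pmod6$ correction being exactly the $-1$ that Theorem \ref{classicalcwc}(i) carries when $n-1\equiv5\pmod6$.

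Next I would organize the constructions by the residue of $n$ modulo $6$. When $n\equiv2,4\pmod6$ a Steiner quadruple system of order $n$ exists by Hanani's theorem; it covers every triple exactly once, hence is a perfect packing of size $\binom{n}{3}/4$, which equals $J(n,4,4)$ for these orders. This settles both classes at once. For $n\equiv0,1,3,5\pmod6$ no quadruple system exists, so an optimal packing must leave a nonempty set of triples uncovered; for each class I would read off the minimum admissible leave from the divisibility of the point degrees and confirm that it forces the packing size down to precisely $J(n,4,4)$, turning the problem in each class into an existence question for a packing of the prescribed size.

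For the four remaining residue classes I would build the optimal packings recursively. The standard machinery is that of group-divisible and candelabra quadruple systems, combined with the ``filling in the groups'' technique: from small optimal packings and suitable ingredient designs, product-type and adjoining constructions produce packings of larger order that remain optimal. By arranging these recursions so that each sufficiently large $n$ in a given class is reducible to smaller admissible orders, the entire infinite problem collapses to a finite list of base orders, which are then disposed of by direct or computer-assisted constructions.

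The hard part will be this last step. One must design the recursions so that, class by class, their reach eventually covers every $n$, and one must actually supply the small ingredient systems and base-case packings they consume; both tasks are delicate because the admissible leaves and the available ingredients vary intricately with $n\bmod{12}$. The $21$ orders collected in $\N$ are exactly those for which neither the recursion nor any presently known direct construction attains $J(n,4,4)$, so they are left as possible exceptions; resolving them would require genuinely new constructions or a sharper bound. This residue-by-residue bookkeeping, together with the construction of the missing ingredients, is where the main effort lies and is the substance of Ji's contribution beyond the earlier results of Hanani and Brouwer.
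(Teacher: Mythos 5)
This theorem is not proved in the paper at all: it is imported verbatim from Hanani, Brouwer, and Ji, so there is no in-paper argument to compare yours against. Judged on its own terms, your proposal gets the easy half right and defers the hard half. The upper bound is correctly and completely derived: the support correspondence, the observation that distance at least $4$ among weight-$4$ words means pairwise support-intersections of size at most $2$, the puncturing at a point $x$ to obtain an $(n-1,4,3)$ constant-weight code of size at most $A(n-1,4,3)$, and the averaging over the $n$ choices of $x$ do yield $A(n,4,4)\leq\lfloor\frac{n}{4}A(n-1,4,3)\rfloor=J(n,4,4)$, with the $n\equiv 0\bmod{6}$ correction falling out of Theorem \ref{classicalcwc}(i) exactly as you say. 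The class $n\equiv 2,4\bmod{6}$ is also genuinely closed by Hanani's existence theorem for Steiner quadruple systems, whose size $\binom{n}{3}/4$ meets the bound.

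The gap is everything else. For $n\equiv 0,1,3,5\bmod{6}$ you state that one should ``read off the minimum admissible leave'' and then ``build the optimal packings recursively'' from group-divisible and candelabra quadruple systems, but you supply neither the leave analysis, nor the recursions, nor the ingredient designs, nor the base cases; you concede as much yourself. That deferred material is not routine bookkeeping --- it is the actual content of the cited results, occupying the span from Hanani (1960) through Brouwer (1978) to Ji (2006), and the set $\N$ of $21$ exceptions is precisely the residue of where those constructions still fail, so it cannot be identified without carrying the constructions out. In short, your text is an accurate roadmap of the literature proof rather than a proof: entirely adequate if the intent is (as in the paper) to cite the result, but not sufficient to establish it.
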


Throughout this section, $\C$ is an $(n,4,4)$ bounded-weight code and
$\S=([n],\A)$ is the set system of $\C$. Note that 
taking an optimal $(n,4,4)$ constant-weight code together with the
zero-weight codeword yields an $(n,4,4)$ bounded-weight code of size $A(n,4,4)+1$.
Therefore,
\begin{equation}
\label{lb}
A'(n,4,4) \geq A(n,4,4)+1 = J(n,4,4)+1.
\end{equation}
To establish that the inequality in (\ref{lb}) is indeed an equality,
we show below that $A'(n,4,4) \leq J(n,4,4)+1$.
We do this by case analysis on the number of codewords of weight $k$ in $\C$,
and counting.

Let $m_k$, $k\in\{0,1,2,3,4\}$, 
denote the number of blocks of size $k$ in $\A$.
If $\A$ contains the empty set as a block, then all other blocks of $\A$ must have
size four, and hence the size of $\S$ in this case is at most $A(n,4,4)+1$.
Henceforth, assume $m_0=0$.

Clearly, $m_1\in\{0,1\}$, since $\C$ has minimum distance four.
If $m_1=1$ and $A=\{x\}$ is a block in $\A$,
then $x$ cannot be contained in any other block of $\A$, and
moreover all other blocks of $\A$ must have size three or four. Hence,
it follows that
the size of $\S$ in this case is at most one more than the largest size of an
$(n-1,4,4)$ bounded-weight code with only codewords of size three and four.
We can therefore restrict our attention to 
the case when $m_1=0$. 

For $T\subseteq [n]$ and $k\in\{2,3,4\}$, let
$f_k(T)$ denote the number of blocks of size $k$ in $\A$ that
contain $T$. For succinctness of notation, we suppress braces in the argument
of $f_k$ so that, for example, we write $f_k(x)$ for $f_k(\{x\})$ and $f_k(x,y)$
for $f_k(\{x,y\})$.
Define, for $k\in\{2,3\}$,
\begin{equation*}
D_k=\{x\in[n]: f_k(x) > 0\}.
\end{equation*}
Then $D_2\cap D_3=\emptyset$, since $\dist(\C)=4$. Let $D_4=[n]\setminus(D_2\cup D_3)$.

The following equations are obtained by counting points in the blocks of $\A$:
\begin{align}
\label{kmk1}
\sum_{x\in D_2} f_2(x) &= 2m_2, \\
\label{kmk2}
\sum_{x\in D_3} f_3(x) &= 3m_3, \\
\label{kmk3}
\sum_{x\in [n]} f_4(x) &= 4m_4.
\end{align}
We also have the inequality
\begin{equation*}
f_2(x,y)+f_3(x,y)\leq 1,
\end{equation*}
for $\{x,y\}\subseteq[n]$.
If $f_2(x,y)+f_3(x,y)=1$, then $f_4(x,y)=0$. Otherwise, we have the following.

\begin{lemma}
\label{f(x,y)ub}
Let $\{x,y\}\subseteq[n]$.
If $f_2(x,y)+f_3(x,y)=0$, then
\begin{equation*}
f_4(x,y)\leq \begin{cases}
\left\lfloor \frac{n-4}{2}\right\rfloor,&\text{if $x\in D_2$ and $y\in D_2$} \\
\left\lfloor \frac{n-3-2f_3(y)}{2}\right\rfloor, &\text{if $x\in D_2$ and $y\in D_3$} \\
\left\lfloor \frac{n-3}{2} \right\rfloor, &\text{if $x\in D_2$ and $y\in D_4$} \\
\left\lfloor \frac{n-2-2f_3(x)-2f_3(y)}{2}\right\rfloor, &\text{if $x\in D_3$ and $y\in D_3$} \\
\left\lfloor \frac{n-2-2f_3(x)}{2}\right\rfloor, &\text{if $x\in D_3$ and $y\in D_4$} \\
\left\lfloor \frac{n-2}{2}\right\rfloor, &\text{if $x\in D_4$ and $y\in D_4$.}
\end{cases}
\end{equation*}
\end{lemma}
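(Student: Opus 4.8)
The plan is to read everything off from the pairwise intersection constraints forced by $\dist(\C)=4$ together with the weight bound: once $m_0=m_1=0$, every block has size in $\{2,3,4\}$. For blocks $A,B$ of sizes $a,b$ the relation $a+b-2|A\cap B|=\Delta(A,B)\ge 4$ gives $|A\cap B|\le (a+b-4)/2$, hence: two $2$-blocks are disjoint (so each $x\in D_2$ lies in a unique $2$-block and $f_2(x)=1$); two $3$-blocks meet in at most one point; a $2$-block and a $4$-block meet in at most one point; a $3$-block and a $4$-block meet in at most one point; and two $4$-blocks meet in at most two points. I would also invoke $D_2\cap D_3=\emptyset$, already established in the text.

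For the counting skeleton, fix $\{x,y\}$ with $f_2(x,y)+f_3(x,y)=0$ and enumerate the $4$-blocks through $\{x,y\}$ as $F_1,\dots,F_{f_4(x,y)}$, writing $F_i=\{x,y\}\cup P_i$ with $|P_i|=2$. Since two $4$-blocks meet in at most two points and $F_i,F_j$ both contain $\{x,y\}$, they meet exactly in $\{x,y\}$; thus the $P_i$ are pairwise disjoint $2$-subsets of $[n]\setminus\{x,y\}$. Let $E\subseteq[n]$ collect the points that cannot occur in any $P_i$, with $\{x,y\}\subseteq E$. Then the $P_i$ are $f_4(x,y)$ disjoint pairs inside $[n]\setminus E$, so $f_4(x,y)\le\lfloor (n-|E|)/2\rfloor$, and the whole lemma reduces to computing $|E|$ in each of the six cases.

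To identify $E$: a point $z\ne x,y$ is forbidden as soon as it shares a block of size at most three with $x$ or with $y$. Indeed, if $x\in D_2$ with $2$-block $\{x,x'\}$, then $x'\in E$, since a $4$-block through $x$ containing $x'$ would meet that $2$-block in two points; and if $x\in D_3$, then for each of the $f_3(x)$ three-blocks through $x$ its two points other than $x$ lie in $E$, since a $4$-block through $x$ containing such a point would meet that $3$-block in two points. The $2f_3(x)$ points so produced are distinct, because the $3$-blocks through $x$ pairwise meet only in $x$; the same statements hold with $y$ in place of $x$. Feeding the contributions of $x$ and of $y$ into $E$ and evaluating $|E|$ yields $4$, $3+2f_3(y)$, $3$, $2+2f_3(x)+2f_3(y)$, $2+2f_3(x)$, $2$ in the cases $(D_2,D_2)$, $(D_2,D_3)$, $(D_2,D_4)$, $(D_3,D_3)$, $(D_3,D_4)$, $(D_4,D_4)$, which is exactly the claimed list.

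The real content is certifying that these contributions are genuinely disjoint, so that $|E|$ equals the stated sum rather than something smaller. That the contribution of $x$ avoids $\{x,y\}$ uses $f_2(x,y)+f_3(x,y)=0$ (a block of size at most three through $x$ cannot also contain $y$), and the mixed cases together with $(D_2,D_2)$ follow immediately from $D_2\cap D_3=\emptyset$ and the disjointness of distinct $2$-blocks. The hard part will be the case $x,y\in D_3$: here the two families of $3$-blocks could a priori share a point $c$ outside $\{x,y\}$, and such a $c$ is charged to both the $2f_3(x)$ and the $2f_3(y)$ tallies, so it would deflate $|E|$ below $2+2f_3(x)+2f_3(y)$ unless it is offset by further exclusions. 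Establishing that the forbidden set nonetheless attains its full size in this case — by ruling such shared points out or by locating compensating forbidden points — is the step I expect to be the main obstacle, and the one I would scrutinize most carefully.
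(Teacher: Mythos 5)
Your plan reproduces the paper's proof essentially verbatim: the paper also bounds the number of $4$-blocks through $\{x,y\}$ by $\lfloor (n-2-|E|)/2\rfloor$, where $E\subseteq[n]\setminus\{x,y\}$ is the set of partners of $x$ and of $y$ in their blocks of size two and three, and then evaluates $|E|$ case by case. The five cases you close are closed correctly and by exactly the paper's arguments (distinct $2$-blocks are disjoint, the $3$-blocks through a fixed point pairwise meet only in that point, $D_2\cap D_3=\emptyset$, and $f_2(x,y)+f_3(x,y)=0$ keeps $y$ out of $x$'s contribution and vice versa).

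The case $x,y\in D_3$ that you flag as the main obstacle is a genuine gap in your proposal --- but it is equally a gap in the paper, which simply sets $E=\{a_i,b_i\}\cup\{c_i,d_i\}$ and treats this union as disjoint without comment. Your worry cannot be argued away: a $3$-block through $x$ and a $3$-block through $y$ may share one point outside $\{x,y\}$ (their symmetric difference then has size exactly $4$), in which case $|E|<2f_3(x)+2f_3(y)$, and the stated bound can genuinely fail. Concretely, on $[9]$ the blocks $\{1,3,4\}$, $\{2,3,5\}$, $\{1,2,6,7\}$, $\{1,2,8,9\}$ form a valid $(9,4,4)$ bounded-weight code with $f_2(1,2)+f_3(1,2)=0$, $f_3(1)=f_3(2)=1$ and $f_4(1,2)=2$, whereas the $(D_3,D_3)$ entry of the lemma asserts $f_4(1,2)\le\lfloor(9-2-2-2)/2\rfloor=1$. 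So there are no ``compensating forbidden points'' to be found: the bound provable by this method is only $\lfloor(n-2-|E_x\cup E_y|)/2\rfloor$, and $|E_x\cup E_y|$ can be as small as $\max(2f_3(x),2f_3(y))$ (the $3$-blocks through $y$ may recycle points already excluded on account of $x$). Any repair must either weaken the $(D_3,D_3)$ entry accordingly or impose an extra hypothesis excluding shared points, and the weakened estimate then has to be carried through inequality (\ref{3f4*}) and the bound on $F_3(x)$ that depends on it.
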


\begin{proof}
First note that the blocks of size four containing $\{x,y\}$ have pairwise
intersection exactly $\{x,y\}$, so that the number of such blocks is at most
$\left\lfloor (n-2-|E|)/2\right\rfloor$, where $E\subseteq [n]\setminus\{x,y\}$
is a set of points that must be excluded in any blocks of size four containing $\{x,y\}$.
\begin{enumerate}[(i)]
\item When $x\in D_2$ and $y\in D_2$, suppose that the two blocks of size two containing $x$
and $y$, respectively, are $\{x,a\}$ and $\{y,b\}$. Then taking $E=\{a,b\}$ gives
$f_4(x,y)\leq \left\lfloor (n-4)/2\right\rfloor$.

\item When $x\in D_2$ and $y\in D_3$, suppose that the block of size two
containing $x$ is $\{x,a\}$, and the blocks of size three containing $y$ are
$\{y,b_i,c_i\}$, $i=1,\ldots, f_3(y)$. Then taking $E=\{a,b_i,c_i: i=1,\ldots,f_3(y)\}$
gives 
\begin{equation*}
f_4(x,y)\leq \left\lfloor (n-3-2f_3(y))/2\right\rfloor.
\end{equation*}

\item When $x\in D_2$ and $y\in D_4$, suppose that the block of size two containing $x$
is $\{x,a\}$. Then taking $E=\{a\}$ gives
$f_4(x,y)\leq \left\lfloor (n-3)/2\right\rfloor$.

\item When $x\in D_3$ and $y\in D_3$, suppose that the blocks of size three
containing $x$ and the blocks of size three containing $y$ are
$\{x,a_i,b_i\}$, $i=1,\ldots,f_3(x)$, and $\{y,c_i,d_i\}$, $i=1,\ldots,f_3(y)$, respectively.
Then taking $E=\{a_i,b_i: i=1,\ldots,f_3(x)\}\cup\{c_i,d_i: i=1,\ldots,f_3(y)\}$ gives
\begin{equation*}
f_4(x,y)\leq\left\lfloor (n-2-2f_3(x)-2f_3(y))/2\right\rfloor.
\end{equation*}

\item When $x\in D_3$ and $y\in D_4$, suppose that the blocks of size three
containing $x$ are $\{x,a_i,b_i\}$, $i=1,\ldots,f_3(x)$. Then taking
$E=\{a_i,b_i:i=1,\ldots,f_3(x)\}$ gives
$f_4(x,y)\leq \left\lfloor (n-2-2f_3(x))/2\right\rfloor$.

\item When $x\in D_4$ and $y\in D_4$, taking $E=\emptyset$ gives
$f_4(x,y)\leq \left\lfloor (n-2)/2\right\rfloor$. \qed
\end{enumerate}
\end{proof}

Counting in two ways the number of 2-subsets $T\subseteq[n]$ containing the point $x$
such that $T$ is contained in a block of size four gives
\begin{equation}
\label{3f4}
3 f_4(x) = \sum_{y\in [n]\setminus\{x\}} f_4(x,y),
\end{equation}
since each block of size four contains three
2-subsets, each of which contains $x$.

Applying the inequality on $f_4(x,y)$ in Lemma \ref{f(x,y)ub} to (\ref{3f4}), and
recalling that if $f_2(x,y)+f_3(x,y)=1$ then $f_4(x,y)=0$, gives
the inequality
{
\begin{align}
\label{3f4*}
&3f_4(x) \leq \nonumber \\
&\begin{cases}
(|D_2|-2)\left\lfloor \frac{n-4}{2} \right\rfloor + |D_3|\left\lfloor \frac{n-5}{2}\right\rfloor + 
(n-|D_2|-|D_3|) \left\lfloor \frac{n-3}{2}\right\rfloor, & \text{if $x\in D_2$} \\
|D_2|\left\lfloor \frac{n-5}{2}\right\rfloor + (|D_3|-1-2f_3(x))\left\lfloor \frac{n-6}{2}\right\rfloor + 
(n-|D_2|-|D_3|)\left\lfloor \frac{n-4}{2}\right\rfloor, & \text{if $x\in D_3$} \\
|D_2|\left\lfloor \frac{n-3}{2}\right\rfloor + |D_3|\left\lfloor \frac{n-4}{2}\right\rfloor +
(n-1-|D_2|-|D_3|)\left\lfloor \frac{n-2}{2}\right\rfloor, & \text{if $x\in D_4$.}
\end{cases}
\end{align}
}

We are now ready to establish upper bounds on the size of $\S$ when $m_0=m_1=0$.
By (\ref{kmk1})--(\ref{kmk3}),
\begin{align*}
|\A| &= m_2+m_3+m_4 \nonumber \\
& =
\frac{1}{2} \sum_{x\in D_2} f_2(x) + \frac{1}{3}\sum_{x\in D_3}f_3(x) +
\frac{1}{4}\sum_{x\in [n]} f_4(x)  \nonumber \\
&= \frac{1}{2} \sum_{x\in D_2} f_2(x) + \frac{1}{3}\sum_{x\in D_3}f_3(x) + \frac{1}{4}\left( \sum_{x\in D_2} f_4(x) + \sum_{x\in D_3} f_4(x) + \sum_{x\in D_4} f_4(x)\right) \nonumber \\
&= \frac{1}{4} \left(
\sum_{x\in D_2} (2f_2(x) + f_4(x))
 + \sum_{x\in D_3} \left(\frac{4}{3} f_3(x) + f_4(x)\right) + 
\sum_{x\in D_4} f_4(x)
 \right).
\end{align*}
Let 
\begin{align*}
F_2(x) &= 2f_2(x)+f_4(x), \\
F_3(x) &= \frac{4}{3}f_3(x) + f_4(x), \\
F_4(x) & = f_4(x),
\end{align*}
so that 
\begin{equation}
\label{|A|}
|\A|=\frac{1}{4}\left(\sum_{x\in D_2} F_2(x) + \sum_{x\in D_3} F_3(x) + \sum_{x\in D_4} F_4(x)
\right).
\end{equation}

\subsection{$n\equiv 1\bmod{2}$}
\label{nodd}

In this subsection, we consider the case when $n$ is odd.

An upper bound on $F_2(x)=2f_2(x)+f_4(x)$ can be obtained by observing
that there can be at most one block of size two containing $x$ (and hence $f_2(x)\leq 1$),
and applying (\ref{3f4*}) to upper bound $f_4(x)$. More specifically, when
$x\in D_2$, we have
\begin{align*}
F_2(x) &= 2f_2(x)+f_4(x) \\
&\leq 2+\frac{1}{3}\left(
(|D_2|-2)\left\lfloor \frac{n-4}{2} \right\rfloor + |D_3|\left\lfloor \frac{n-5}{2}\right\rfloor + 
(n-|D_2|-|D_3|) \left\lfloor \frac{n-3}{2}\right\rfloor
\right) \\
&= 2+\frac{1}{3}\left(
(|D_2|-2)\cdot \frac{n-5}{2} + |D_3|\cdot\frac{n-5}{2} + 
(n-|D_2|-|D_3|) \cdot\frac{n-3}{2}
\right) \\
&= \frac{n^2-5n-2D_2-2D_3-22}{6} \\
&= \frac{n-1}{3}\cdot\frac{n-3}{2} - \frac{n+2|D_2|+2|D_3|-19}{6}.
\end{align*}
Since $F_2(x)$ is an integer, we have
\begin{equation*}
F_2(x) \leq \left\lfloor \frac{n-1}{3}\cdot\frac{n-3}{2} - \frac{n+2|D_2|+2|D_3|-19}{6}\right\rfloor,
\ \ \ \ \ \text{when $x\in D_2$}.
\end{equation*}
We can similarly derive
\begin{align*}
F_3(x) &\leq \left\lfloor \frac{n-1}{3}\cdot  \frac{n-3}{2} - \nonumber  \frac{n+|D_3|-2+(n-11)f_3(x)}{3} \right\rfloor, \ \ \ \ \ \text{when $x\in D_3$},\\
F_4(x) &\leq \left\lfloor \frac{n-1}{3}\cdot\frac{n-3}{2} - \frac{|D_3|}{3}\right\rfloor,
\ \ \ \ \ \text{when $x\in D_4$}.
\end{align*}

If $|D_2|\not=0$ (that is $|D_2|\geq 2$), then 
$\frac{n+2|D_2|+2|D_3|-19}{6}\geq -2/3$ when $n\geq 11$,
so that $F_2(x)\leq \left\lfloor \frac{n-1}{3}\cdot\frac{n-3}{2}\right\rfloor$ for $n\geq 11$.

If $|D_3|\not=0$ and $x\in D_3$, then $|D_3| \geq 2f_3(x)+1$ since in each block of
size three, $x$ appears with two other points, and these points are all distinct. In this case,
$n+|D_3|-2+(n-11)f_3(x)\geq 0$ when $n\geq 9$, so that
$F_3(x) \leq \left\lfloor \frac{n-1}{3}\cdot\frac{n-3}{2}\right\rfloor$ for $n\geq 9$.

Hence, when $|D_2|\not=0$ or $|D_3|\not=0$,
each of $F_2(x)$, $F_3(x)$, and $F_4(x)$ is at most 
$\left\lfloor \frac{n-1}{3}\cdot\frac{n-3}{2}\right\rfloor$ for $n\geq 11$. It
follows from (\ref{|A|}) that
\begin{align*}
|\A| &\leq \frac{1}{4}\sum_{x\in [n]}\left\lfloor \frac{n-1}{3}\cdot\frac{n-3}{2}\right\rfloor \\
&= \left\lfloor \frac{n}{4} \left\lfloor \frac{n-1}{3} \left\lfloor \frac{n-2}{2}\right\rfloor \right\rfloor 
\right\rfloor = J(n,4,4).
\end{align*}
We summarize the results in this section as:

\begin{proposition}
\label{odd}
When $n\equiv 1\bmod{2}$, $n\geq 11$, an optimal $(n,4,4)$ bounded-weight code $\C$
has $|D_2|=|D_3|=0$, that is, $\C$ contains no codewords of weight two and weight three,
except possibly for $n\in\{23$, $35$, $47$, $59$, $71$, $83$, $95$, $119$, $143$, $155$, $167$, $179$, $191$, $203$, $215$, $275$, $287$, $455$, $467$, $479$, $959\}$.
\end{proposition}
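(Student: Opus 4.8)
The plan is to argue at the level of sizes: I will show that \emph{any} $(n,4,4)$ bounded-weight code possessing a codeword of weight two or three has size at most $J(n,4,4)$, and then observe that this is strictly smaller than the size of an optimal code. For the latter I use the lower bound (\ref{lb}) together with Theorem \ref{Ji}: for $n\notin\N$ we have $A(n,4,4)=J(n,4,4)$, so (\ref{lb}) gives $A'(n,4,4)\ge J(n,4,4)+1$. A code with $|D_2|\ne 0$ or $|D_3|\ne 0$ therefore has size $\le J(n,4,4)<J(n,4,4)+1\le A'(n,4,4)$ and cannot be optimal; hence an optimal code must satisfy $|D_2|=|D_3|=0$. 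This also explains the exceptional set: for $n\in\N$ Theorem \ref{Ji} leaves $A(n,4,4)$ undetermined, so the optimal size could be as small as $A(n,4,4)+1\le J(n,4,4)$, and a code using a short codeword might then still be optimal, which is why these $n$ are excluded.

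First I would dispose of the degenerate cases already isolated in the discussion above. If $m_0=1$, the empty block forces every remaining block to have size four, so $|D_2|=|D_3|=0$ holds outright; and if $m_1=1$, one reduces to a code on $n-1$ points whose blocks have size three or four. It therefore suffices to establish the size bound in the regime $m_0=m_1=0$, where $|\A|=m_2+m_3+m_4$ and the counting identities (\ref{kmk1})--(\ref{kmk3}) let (\ref{|A|}) rewrite this as $\tfrac14\big(\sum_{x\in D_2}F_2(x)+\sum_{x\in D_3}F_3(x)+\sum_{x\in D_4}F_4(x)\big)$.

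The core of the argument is the uniform per-point estimate: assuming $|D_2|\ne 0$ or $|D_3|\ne 0$, each of $F_2(x)$, $F_3(x)$, $F_4(x)$ is at most $\big\lfloor\frac{n-1}{3}\cdot\frac{n-3}{2}\big\rfloor$ for odd $n\ge 11$. For $x\in D_2$ this combines $f_2(x)\le 1$ with the $D_2$-branch of (\ref{3f4*}) and the fact that $|D_2|\ne 0$ means $|D_2|\ge 2$; for $x\in D_3$ it combines the inequality $|D_3|\ge 2f_3(x)+1$ with the $D_3$-branch; and for $x\in D_4$ it is immediate. Summing these $n$ bounds and dividing by four collapses the nested floors to exactly $J(n,4,4)$, where the parity of $n$ is what makes $\lfloor\frac{n-2}{2}\rfloor=\frac{n-3}{2}$ so that the target floor expression is reached. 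This yields $|\A|\le J(n,4,4)$ and completes the contradiction.

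The step I expect to be delicate is the floor bookkeeping in the $F_2$ estimate. Unlike the $F_3$ and $F_4$ corrections, which are nonnegative and therefore drop harmlessly inside the floor, the $F_2$ correction term $\frac{n+2|D_2|+2|D_3|-19}{6}$ can be negative, and one must verify that it never falls below the floor tolerance of $-\tfrac23$; this is precisely what pins the threshold at $n\ge 11$ and demands separate care at the smallest admissible values, where the inequality is tight. Keeping the three point-classes, the parities, and these threshold inequalities mutually consistent is the bulk of the work; everything else is assembly of the identities and bounds already in place.
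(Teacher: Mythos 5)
Your proposal follows the paper's own argument essentially step for step: the same reduction to the regime $m_0=m_1=0$, the same per-point bounds $F_2(x),F_3(x),F_4(x)\leq\left\lfloor\frac{n-1}{3}\cdot\frac{n-3}{2}\right\rfloor$ derived from (\ref{3f4*}) together with $f_2(x)\leq 1$ and $|D_3|\geq 2f_3(x)+1$, the same $-\tfrac23$ floor tolerance pinning the threshold $n\geq 11$, and the same comparison with the lower bound $A'(n,4,4)\geq J(n,4,4)+1$ via Theorem \ref{Ji}, which is exactly what forces the exceptional set $\N$. The approach and all the key estimates coincide with the paper's, so there is nothing further to add.
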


\subsection{$n\equiv 2$ or $4\bmod{6}$}

In this subsection, we consider the case when $n\equiv 2$ or $4\bmod{6}$.

Using (\ref{3f4*}) and simplifying gives, when $x\in D_2$:
\begin{equation*}
F_2(x) \leq \left\lfloor \frac{n-1}{3}\cdot\frac{n-2}{2} - \frac{3n+2|D_3|-18}{6}\right\rfloor,
\end{equation*}
when $x\in D_3$:
\begin{equation*}
F_3(x) \leq \left\lfloor \frac{n-1}{3}\cdot \frac{n-2}{2} -
 \frac{n+|D_2|+|D_3|-2+(n-10)f_3(x)}{3} \right\rfloor,
\end{equation*}
and when $x\in D_4$:
\begin{equation*}
F_4(x) \leq \left\lfloor \frac{n-1}{3}\cdot\frac{n-2}{2} - \frac{|D_2|+|D_3|}{3}\right\rfloor.
\end{equation*}

If $|D_2|\not=0$, then since $3n+2|D_3|-18\geq 0$ when $n\geq 6$, we have 
$F_2(x)\leq \left\lfloor \frac{n-1}{3}\cdot\frac{n-2}{2}\right\rfloor$ for $n\geq 6$.

If $|D_3|\not=0$, then $|D_3|\geq 2f_3(x)+1$, giving
$n+|D_2|+|D_3|-2+(n-10)f_3(x)\geq 0$ when $n\geq 8$, so that
$F_3(x)\leq \left\lfloor \frac{n-1}{3}\cdot\frac{n-2}{2}\right\rfloor$ for $n\geq 8$.

As in subsection \ref{nodd}, we deduce the following:

\begin{proposition}
\label{24}
When $n\equiv 2$ or $4\bmod{6}$, $n\geq 8$, an optimal $(n,4,4)$ bounded-weight code
$\C$ has $|D_2|=|D_3|=0$, that is, $\C$ contains no codewords of weight two and weight three.
\end{proposition}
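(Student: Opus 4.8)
The plan is to rerun the contradiction argument of subsection \ref{nodd} verbatim, now feeding in the three bounds on $F_2(x)$, $F_3(x)$, $F_4(x)$ already displayed for the classes $n\equiv 2,4\bmod 6$ (all obtained from (\ref{3f4*})). Working under the running assumptions $m_0=m_1=0$, I would suppose toward a contradiction that an optimal $\C$ has $|D_2|\neq 0$ or $|D_3|\neq 0$, and aim to deduce $|\A|\leq J(n,4,4)$. This contradicts the lower bound $A'(n,4,4)\geq J(n,4,4)+1$ of (\ref{lb}), forcing $|D_2|=|D_3|=0$.

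The first task is to show that, under this hypothesis and for $n\geq 8$, every point contributes a summand at most $\left\lfloor \frac{n-1}{3}\cdot\frac{n-2}{2}\right\rfloor$ to (\ref{|A|}). For $x\in D_2$ the correction term $\frac{3n+2|D_3|-18}{6}$ is nonnegative once $n\geq 6$ (using $|D_3|\geq 0$), giving $F_2(x)\leq \left\lfloor \frac{n-1}{3}\cdot\frac{n-2}{2}\right\rfloor$. For $x\in D_3$ I would invoke $|D_3|\geq 2f_3(x)+1$ (each size-three block through $x$ meets two further distinct points), which makes $n+|D_2|+|D_3|-2+(n-10)f_3(x)\geq 0$ for $n\geq 8$ and yields the same bound for $F_3(x)$. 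For $x\in D_4$ the bound is automatic, since the subtracted quantity $\frac{|D_2|+|D_3|}{3}$ is nonnegative. The binding threshold is therefore $n\geq 8$, coming from the $D_3$ case; the argument also goes through if only one of $D_2,D_3$ is nonempty, since the empty set contributes nothing and the $F_4$ bound always holds.

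Summing these $n$ contributions in (\ref{|A|}) then gives $|\A|\leq \frac{1}{4}\,n\left\lfloor \frac{n-1}{3}\cdot\frac{n-2}{2}\right\rfloor$, and since $|\A|$ is an integer, $|\A|\leq \left\lfloor \frac{n}{4}\left\lfloor \frac{n-1}{3}\left\lfloor \frac{n-2}{2}\right\rfloor\right\rfloor\right\rfloor = J(n,4,4)$, the last equality holding because $n\not\equiv 0\bmod 6$. This contradicts optimality, completing the proof.

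The only delicate point — the step I would flag as the main obstacle — is the floor collapse turning the single floor $\frac{1}{4}n\left\lfloor\frac{n-1}{3}\cdot\frac{n-2}{2}\right\rfloor$ into the nested $J(n,4,4)$. This is precisely where the hypothesis $n\equiv 2$ or $4\bmod 6$ is used: both residues force $n$ even, so $\frac{n-2}{2}$ is already an integer and $\left\lfloor\frac{n-1}{3}\cdot\frac{n-2}{2}\right\rfloor=\left\lfloor\frac{n-1}{3}\left\lfloor\frac{n-2}{2}\right\rfloor\right\rfloor$, while $n\not\equiv 0\bmod 6$ ensures $J$ takes its unshifted form. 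The $n\equiv 0\bmod 6$ case, with its $-1$ correction in the definition of $J$, is genuinely excluded here and is handled in a separate subsection.
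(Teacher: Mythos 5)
Your proposal is correct and follows essentially the same route as the paper: it uses the displayed bounds derived from (\ref{3f4*}) to show each of $F_2(x)$, $F_3(x)$, $F_4(x)$ is at most $\left\lfloor \frac{n-1}{3}\cdot\frac{n-2}{2}\right\rfloor$ for $n\geq 8$ (the $D_3$ case via $|D_3|\geq 2f_3(x)+1$ being the binding one), sums in (\ref{|A|}) to get $|\A|\leq J(n,4,4)$, and contradicts the lower bound (\ref{lb}). Your explicit justification of the floor collapse (valid here since $n$ is even and $n\not\equiv 0\bmod 6$, with all exceptional lengths in $\N$ being odd) is a detail the paper leaves implicit but changes nothing in substance.
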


\subsection{$n\equiv 0\bmod{6}$}

Here, the remaining case of $n\equiv 0\bmod{6}$ is addressed.
First note that
\begin{equation*}
\left\lfloor \frac{n-1}{3} \left\lfloor \frac{n-2}{2}\right\rfloor \right\rfloor -1 =
\frac{n^2-3n-6}{6}.
\end{equation*}

Using (\ref{3f4*}) and simplifying gives, when $x\in D_2$:
\begin{equation*}
F_2(x) \leq \left\lfloor \frac{n^2-3n-6}{6} - \frac{3n+2|D_3|-26}{6}\right\rfloor,
\end{equation*}
when $x\in D_3$:
\begin{equation*}
F_3(x) \leq \left\lfloor \frac{n^2-3n-6}{6} - 
 \frac{n+|D_2|+|D_3|-6+(n-10)f_3(x)}{3} \right\rfloor,
\end{equation*}
and when $x\in D_4$:
\begin{equation*}
F_4(x) \leq \left\lfloor \frac{n^2-3n-6}{6}- \frac{|D_2|+|D_3|-4}{3}\right\rfloor.
\end{equation*}

If $|D_2|\not=0$, then since $3n+2|D_3|-26\geq 0$ when $n\geq 9$, we have 
$F_2(x)\leq \frac{n^2-3n-6}{6}$ for $n\geq 9$.

If $|D_3|\not=0$, then $|D_3|\geq 2f_3(x)+1$, giving
$n+|D_2|+|D_3|-6+(n-10)f_3(x)\geq 0$ when $n\geq 8$, so that
$F_3(x)\leq \frac{n^2-3n-6}{6}$ for $n\geq 8$.

If $|D_2\cup D_3|\not=0$, then $|D_2\cup D_3|\geq 2$, and hence 
$F_4(x)\leq \left\lfloor \frac{n^2-3n-6}{6}- \frac{|D_2|+|D_3|-4}{3}\right\rfloor\leq 
\frac{n^2-3n-6}{6}$.

We therefore have:

\begin{proposition}
\label{0}
When $n\equiv 0\bmod{6}$, $n\geq 12$, an optimal $(n,4,4)$ bounded-weight code
$\C$ has $|D_2|=|D_3|=0$, that is, $\C$ contains no codewords of weight two and weight three.
\end{proposition}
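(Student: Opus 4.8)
The plan is to combine the three per-point estimates already derived in this subsection with the counting identity (\ref{|A|}) and the lower bound (\ref{lb}), arguing by contradiction. The organizing observation is that for $n\equiv 0\bmod 6$ the relevant per-point budget is the \emph{reduced} quantity $\frac{n^2-3n-6}{6}=\lfloor\frac{n-1}{3}\lfloor\frac{n-2}{2}\rfloor\rfloor-1$, which is exactly one less than in the odd case; so I expect that the mere presence of a single small codeword (of weight two or three) forces every surviving $F_i(x)$ down to this reduced value, whence $|\A|\leq J(n,4,4)$, contradicting optimality.

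First I would suppose $\C$ is optimal but $|D_2\cup D_3|\neq 0$. Then I would read off from the three displayed inequalities that each of $F_2(x)$ (for $x\in D_2$), $F_3(x)$ (for $x\in D_3$), and $F_4(x)$ (for $x\in D_4$) is bounded above by $\frac{n^2-3n-6}{6}$ once $n$ is large enough: if $|D_2|\neq 0$ the correction $\frac{3n+2|D_3|-26}{6}$ is nonnegative for $n\geq 9$; if $|D_3|\neq 0$ then $|D_3|\geq 2f_3(x)+1$ makes the $F_3$-correction nonnegative for $n\geq 8$; and since any small block meets at least two points, $|D_2|+|D_3|\geq 2$, so the $F_4$-correction $\frac{|D_2|+|D_3|-4}{3}$ is at least $-\frac{2}{3}$ and, $\frac{n^2-3n-6}{6}$ being an integer, the floor rounds down to $\frac{n^2-3n-6}{6}$. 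Whenever one of $D_2,D_3$ happens to be empty the corresponding sum in (\ref{|A|}) simply vanishes, so no separate treatment is needed; restricting to $n\equiv 0\bmod 6$ the first value meeting all thresholds is $n=12$.

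Next I would sum over all $n$ points via (\ref{|A|}), obtaining $|\A|\leq\frac{1}{4}\cdot n\cdot\frac{n^2-3n-6}{6}$, and then use integrality of $|\A|$ together with the identity above to conclude $|\A|\leq\lfloor\frac{n}{4}\cdot\frac{n^2-3n-6}{6}\rfloor=J(n,4,4)$. Since (\ref{lb}) gives $A'(n,4,4)\geq J(n,4,4)+1$, an optimal code must have strictly more than $J(n,4,4)$ codewords, which is the desired contradiction. Hence an optimal $\C$ satisfies $|D_2\cup D_3|=0$, i.e.\ $|D_2|=|D_3|=0$. As in the rest of the section this is carried out in the reduced regime $m_0=m_1=0$; the case $m_0=1$ already forces all remaining blocks to have size four (so no small codewords occur), and the case $m_1=1$ was reduced away earlier.

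The genuinely delicate step is the middle one: closing each inequality against the reduced budget $\frac{n^2-3n-6}{6}$ rather than the larger $\frac{(n-1)(n-2)}{6}$ of the odd case. This is precisely what singles out $n\equiv 0\bmod 6$ and is the source of the slightly raised thresholds; the only ingredients that make it work are the integrality of $\frac{n^2-3n-6}{6}$ and the elementary bound $|D_2\cup D_3|\geq 2$ for the $F_4$ term. Everything after the per-point bounds is routine counting and a one-line comparison with (\ref{lb}).
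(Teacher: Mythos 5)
Your proposal is correct and follows essentially the same route as the paper: the same three per-point bounds with the reduced budget $\frac{n^2-3n-6}{6}$, the same observations that $3n+2|D_3|-26\geq 0$, that $|D_3|\geq 2f_3(x)+1$, and that $|D_2\cup D_3|\geq 2$ makes the floor absorb the $F_4$ correction, followed by summing via (\ref{|A|}) and comparing with (\ref{lb}). The paper leaves the final summation and contradiction implicit (``as in subsection \ref{nodd}''), but the content is identical.
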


\subsection{Optimal $(n,4,4)$ Bounded-Weight Codes of Small Lengths}
\label{small}

\begin{table}
\renewcommand{\arraystretch}{1.3}
\caption{Some Optimal $(n,4,4)$ Bounded-Weight Codes}
\label{small4}
\centering
\begin{tabular}{r | l}
\hline
$n$ & Codewords \\
\hline\hline
6 & 111000 100101 010110 001011 \\
\hline
7 & 0000000
1110010
1101001
1010101
1001110
0111100
0100111
0011011 \\
\hline
9 & 000000000
111010000
110100100
110001001
101100001 
101000110 
100101010 \\
& 100011100
100010011
011100010 
011001100
010110001 
010011010
010000111 \\
& 001111000 
001010101
001001011
000110110 
000101101 \\
\hline
\end{tabular}
\end{table}

The values of $A'(n,4,4)$
for several small values of $n$ are provided below.

\begin{proposition}
\label{smallcodes}
\begin{align*}
A'(n,4,4) &=
\begin{cases}
1,&\text{if $n\leq 3$} \\
2,&\text{if $n\in\{4,5\}$} \\
4,&\text{if $n=6$} \\
8,&\text{if $n=7$} \\
19,&\text{if $n=9$.} \\
\end{cases}
\end{align*}
\end{proposition}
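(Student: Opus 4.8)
The plan is to prove the lower and upper bounds separately, and to split the upper bound into the genuinely short lengths $n\le 7$, where a Plotkin-type averaging suffices, and the length $n=9$, where the counting of Subsection \ref{nodd} must be redone by hand because the hypothesis $n\ge 11$ of Proposition \ref{odd} narrowly fails. For the lower bounds I would simply display a code of the claimed size: the codeword $\vzero$ alone for $n\le 3$; the pair $\{\vzero,\vu\}$ with $\wt(\vu)=4$ for $n\in\{4,5\}$; and the codes of Table \ref{small4} for $n\in\{6,7,9\}$. Equivalently, for $n\in\{6,7,9\}$ one invokes (\ref{lb}) together with Theorem \ref{Ji}: since $6,7,9\notin\N$, computing $J(6,4,4)=3$, $J(7,4,4)=7$, and $J(9,4,4)=18$ yields $A'(n,4,4)\ge A(n,4,4)+1=4,8,19$ respectively.

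For the upper bound when $n\le 7$ I would use only that $\C$ has minimum distance $4$ while $2\cdot 4>n$. When $n\le 3$ no two distinct vectors of $\bbF_2^n$ can be at distance $4$, so $|\C|\le 1$. For $4\le n\le 7$ a Plotkin-type averaging of the pairwise distances against the column sums gives $|\C|\le 2\lfloor 4/(8-n)\rfloor$, that is $2,2,4,8$ for $n=4,5,6,7$; these match the lower bounds and settle all $n\le 7$.

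The substantive case is $n=9$, where I must prove $A'(9,4,4)\le 19$, and I would case on $m_0$. If $m_0=1$, the remark preceding Proposition \ref{odd} forces every remaining block to have size $4$, so $|\A|\le A(9,4,4)+1=19$ by Theorem \ref{Ji}; this is precisely the extremal configuration. If $m_0=0$ I would show the stronger bound $|\A|\le 18$. The subcase $m_1=1$ reduces, as in that same remark, to a code on the eight points $[9]\setminus\{x\}$ whose codewords all have weight $3$ or $4$; the counting of the $n\equiv 2,4\bmod 6$ subsection is already valid at $n=8$ and bounds its size by $A(8,4,4)=14$, so $|\A|\le 1+14<19$. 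This isolates the core subcase $m_0=m_1=0$, where I would substitute $n=9$ into (\ref{3f4*}) to bound $F_2(x),F_3(x),F_4(x)$, noting that $\lfloor\frac{n-1}{3}\cdot\frac{n-3}{2}\rfloor=8$, and then track the ``excess'' $F(x)-8$ summed over the nine points. Using $f_2(x)\le 1$, the inequality $|D_3|\ge 2f_3(x)+1$ for $x\in D_3$, and $|D_2|\ge 2$ whenever $D_2\ne\emptyset$, every point of $D_3\cup D_4$ contributes nonpositive excess, while a point of $D_2$ contributes at most $1$ and only in the single configuration $|D_2|=2$, $|D_3|=0$; hence the total excess is at most $2$. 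Since $4|\A|=\sum_x F(x)\le 9\cdot 8+2=74$ by (\ref{|A|}), this gives $|\A|\le 18$, completing the upper bound.

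The hard part will be exactly this last subcase: at $n=9$ the correction terms in the $F_k$-bounds can turn positive (a single weight-two codeword already pushes some $F_2(x)$ up to $9$), so, unlike in the proof of Proposition \ref{odd}, one cannot argue pointwise that $F(x)\le 8$. The real work is to verify that these local surpluses cannot accumulate past the global slack $4\cdot 19-9\cdot 8$, which I expect to require the careful bookkeeping sketched above together with an explicit check that the one surplus-producing configuration ($m_0=m_1=m_3=0$, $m_2=1$) still lands at $|\A|\le 18$.
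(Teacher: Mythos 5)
Your proposal is correct, but it takes a genuinely different route from the paper: the paper's proof of Proposition~\ref{smallcodes} is a one-line appeal to exhaustive computer search, exhibiting the optimal codes of Table~\ref{small4} and asserting optimality by machine verification. You instead give a hand proof. For $n\le 7$ the Plotkin bound $|\C|\le 2\lfloor 4/(8-n)\rfloor$ does match the constructions (values $2,2,4,8$), and for $n=9$ your refinement of the paper's own counting machinery checks out: with $m_0=m_1=0$ one gets $\lfloor\frac{n-1}{3}\cdot\frac{n-3}{2}\rfloor=8$, points of $D_3\cup D_4$ satisfy $F(x)\le 8$ (indeed $3f_4(x)\le 17-|D_3|-2f_3(x)\le 16-4f_3(x)$ for $x\in D_3$, and $3f_4(x)\le 24-|D_3|$ for $x\in D_4$), and a point of $D_2$ satisfies $F_2(x)\le 2+\lfloor(23-|D_2|-|D_3|)/3\rfloor\le 9-\lceil|D_3|/3\rceil$, which exceeds $8$ only when $|D_2|=2$ and $|D_3|=0$; the total surplus over $9\cdot 8=4J(9,4,4)$ is thus at most $2<4$, giving $|\A|\le 18$ in that subcase, while $m_0=1$ gives $\le 1+A(9,4,4)=19$ and $m_1=1$ gives $\le 1+14=15$. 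What your approach buys is a verifiable, computation-free proof and an explanation of \emph{why} $19$ is extremal (the zero word plus an optimal constant-weight code is the only way to beat $18$); what it costs is length and the delicate excess bookkeeping at $n=9$, which the paper sidesteps entirely, and it leans on the exact values $A(8,4,4)=14$ and $A(9,4,4)=18$ from Theorem~\ref{Ji}, whereas the exhaustive search is independent of those results.
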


\begin{proof}
The value of $A'(n,4,4)$ is easily obtained for $n\leq 5$.
The optimal $(n,4,4)$ bounded-weight codes for $n\in\{6,7,9\}$
are given in Table \ref{small4}. The codes are obtained via exhaustive search.
\qed
\end{proof}

Proposition \ref{smallcodes} shows that $A'(n,4,4)=A(n,4,4)+1$ for $n\in\{3,4,5,6,7,9\}$.

\subsection{Piecing Together}

Let $n\not\in \N\cup\{1,2,3,4,5,6,7,9\}$, and let $\C$
be an $(n,4,4)$ bounded-weight code.

Propositions \ref{odd}, \ref{24}, and \ref{0}
imply that $\C$ has size
at most $J(n,4,4)$ if $\C$ contains codewords of weight two and/or three.
Since (\ref{lb}) with Theorem \ref{Ji} gives
$A'(n,4,4)\geq J(n,4,4)+1$, $\C$ cannot be optimal.
Therefore, if $\C$ is optimal, $\C$ can
contain only codewords of weight zero, one, or four.

Suppose $\C$ contains only codewords of weight zero and four.
Then obviously, $|\C| \leq A(n,4,4)+1=J(n,4,4)+1$.
If $\C$ contains a codeword of weight one,
we have seen earlier that $\C$ has size at most one more than the size of
the largest $(n-1,4,4)$ bounded weight code containing only codewords of weight three
and four. As shown above, such a code has size at most $J(n-1,4,4)$. Since
$J(n-1,4,4)+1\leq J(n,4,4)+1$, we may assume that if $\C$ is optimal, then $\C$ has
no codewords of weight one. With the results in subsection \ref{small}, we now have:

\begin{theorem}
For all positive integers $n$,
\begin{equation*}
A'(n,4,4) = J(n,4,4)+1,
\end{equation*}
except possibly for $n\in\N$.
\end{theorem}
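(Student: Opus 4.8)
The plan is to sandwich $A'(n,4,4)$ between $J(n,4,4)+1$ and itself. The lower bound $A'(n,4,4)\geq J(n,4,4)+1$ is already in hand from~(\ref{lb}) together with Theorem~\ref{Ji}, so the entire burden is the matching upper bound $A'(n,4,4)\leq J(n,4,4)+1$ for every $n\notin\N$. I would phrase everything in terms of the set system $\S=([n],\A)$ of an optimal $(n,4,4)$ bounded-weight code $\C$ and argue by its block-size distribution $(m_0,m_1,m_2,m_3,m_4)$.

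The first move is to dispose of the easy extremes and reduce the weight profile. Because $\dist(\C)=4$, a weight-zero codeword forces every other codeword to have weight four, giving $|\C|\leq A(n,4,4)+1$; and $m_1\leq 1$, with a single weight-one codeword isolating its support point and leaving an $(n-1,4,4)$ bounded-weight code on the remaining coordinates that carries only weights three and four. After these reductions it suffices to bound codes whose codewords all have weight in $\{2,3,4\}$, and to show that the mere presence of a weight-two or weight-three codeword is already suboptimal.

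That suboptimality is the heart of the matter and the step I expect to be the main obstacle. The idea is to write $|\A|=\tfrac14\big(\sum_{x\in D_2}F_2(x)+\sum_{x\in D_3}F_3(x)+\sum_{x\in D_4}F_4(x)\big)$ as in~(\ref{|A|}), and to show that as soon as $D_2$ or $D_3$ is nonempty, each per-point quantity $F_2(x),F_3(x),F_4(x)$ is at most $\lfloor\frac{n-1}{3}\lfloor\frac{n-2}{2}\rfloor\rfloor$ (or its $n\equiv0\bmod 6$ analogue), whence $|\A|\leq J(n,4,4)$, strictly below the lower bound $J(n,4,4)+1$. These per-point bounds come from feeding Lemma~\ref{f(x,y)ub} into the double-counting identity~(\ref{3f4}) to obtain~(\ref{3f4*}). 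The delicate part is that the admissible thresholds on $n$ differ across the residue classes $n$ odd, $n\equiv2,4\bmod 6$, and $n\equiv0\bmod 6$, so the floor-function bookkeeping must be carried out separately in each class, as in Propositions~\ref{odd}, \ref{24}, and~\ref{0}, each leaving its own finite list of small $n$ uncovered.

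Finally I would piece the cases together. For $n\notin\N\cup\{1,\dots,7,9\}$ the three propositions force an optimal $\C$ to contain only codewords of weight zero, one, or four; the weight-one reduction then gives $|\C|\leq J(n-1,4,4)+1\leq J(n,4,4)+1$, while the weight-$\{0,4\}$ case gives $|\C|\leq A(n,4,4)+1=J(n,4,4)+1$, matching the lower bound. The residual small lengths $n\in\{3,4,5,6,7,9\}$ are settled directly by the explicit optimal codes of Proposition~\ref{smallcodes} (Table~\ref{small4}), and $n\in\{1,2\}$ are trivial. Combining these with~(\ref{lb}) yields the claimed equality for all $n\notin\N$.
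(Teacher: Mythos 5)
Your proposal is correct and follows essentially the same route as the paper: the lower bound from~(\ref{lb}) and Theorem~\ref{Ji}, the reduction eliminating weight-zero and weight-one codewords, the per-point bounds on $F_2,F_3,F_4$ via Lemma~\ref{f(x,y)ub}, (\ref{3f4}), and~(\ref{3f4*}) split by residue class as in Propositions~\ref{odd}, \ref{24}, and~\ref{0}, and the final assembly with the small lengths from Proposition~\ref{smallcodes}. No substantive differences from the paper's argument.
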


\section{Conclusion}

In this paper, we continue the investigation of the function $A'(n,d,e)$, which gives the
smallest possible $\ell$ so that every $(n,d)$ code is list decodable with a list of length $\ell$ up to
radius $e$. Exact values of $A'(n,d,e)$ were determined for $d\geq 2e-5$ and $d\leq 3$.
As a result, the exact value of $A'(n,d,e)$ is now known for all but 42
values of $n$, when $e\leq 4$.
Our approach in this paper is purely combinatorial, and we have not attempted to address
the existence of codes admitting efficient list decoding algorithms
capable of meeting the bounds established here.

\begin{acknowledgements}
We thank the anonymous reviewer for helpful comments.
\end{acknowledgements}

%
%
%

\bibliographystyle{spmpsci}      
\bibliography{/Users/ymchee/Documents/Bibliography/mybibliography}   

\end{document}